\documentclass[a4paper]{llncs}
\usepackage[centertags]{amsmath}
\usepackage{amsfonts}
\usepackage{amssymb}
\pagestyle{plain}
\def\wt{{\rm wt}}
\def\ord{{\rm ord}}

\title{New quantum codes from self-dual codes over $\F_4$}

\author{Reza Dastbasteh \and Petr Lison\v{e}k}

\institute{Simon Fraser University\\
\tt{rdastbas@sfu.ca\ \ plisonek@sfu.ca}}

\begin{document}

\newcommand{\Tr}{{\rm Tr}}
\newcommand{\F}{\mathbb{F}}
\newcommand{\Z}{\mathbb{Z}}

\maketitle

\begin{abstract}
We present new constructions of binary quantum codes from quaternary linear Hermitian self-dual codes. Our main ingredients for these constructions are nearly self-orthogonal cyclic or duadic codes over $\F_4$. An infinite family of $0$-dimensional binary quantum codes is provided. We give minimum distance lower bounds for our quantum codes in terms of the minimum distance of their ingredient linear codes. We also present new results on the minimum distance 
of linear cyclic codes using their fixed subcodes.
Finally, we list many new record-breaking quantum codes obtained from our constructions.
\end{abstract}

\begin{keywords}
quantum code, duadic code, quadratic residue code, cyclic code, self-dual code, minimum distance bound
\end{keywords}

\section{Introduction}

Quantum error-correcting codes or simply quantum codes are applied to protect quantum information from corruption by noise (decoherence) on the quantum channel 
in a way that is similar to that of classical error-correcting codes. In this paper we work exclusively with {\em binary
	quantum codes;} throughout the paper we will
mostly skip the adjective ``binary'' for brevity. 
The parameters of a binary quantum code that encodes $k$ logical qubits into $n$ 
physical qubits and has minimum distance $d$ 
are denoted by $[[n,k,d]]$. 

Our objective is to present theoretical results
that permit construction of good quantum codes with $k=0$
(sometimes called 0-dimensional quantum codes), and use
these results in numerical searches for good codes
of length up to 241. Our theoretical results
rely on the well known connection
of classical quaternary linear codes and binary quantum codes;
in this connection the 0-dimensional quantum codes
correspond to Hermitian self-dual classical linear codes.
We pay special attention to a certain family
of  linear cyclic codes known as duadic codes,
which are then confirmed numerically to be good sources
of 0-dimensional  quantum codes. However, we also provide other constructions
that are applicable to more general cyclic codes as well.
To support our numerical searches for good quantum codes
we also prove new theoretical results on bounds for the minimum
distance of classical cyclic codes.

We survey the required background in Section~2.
In Section~3 we provide constructions of quantum
codes based on duadic codes. In Section~4 we give constructions of quantum
codes based on more general classical codes.
In Section~5 we prove new minimum distance bounds
for cyclic codes using the fixed subcodes.
In Section~6 we present our numerical results
which include 12 new record breaking 0-dimensional
quantum codes; many other record breaking
codes can be derived by secondary constructions.

\section{Background}

An important class of quantum codes are quantum stabilizer codes. Binary stabilizer codes were
introduced  by Calderbank et al.\ \cite{Calderbank} and Gottesman \cite{Gottesman}. Each binary stabilizer code is a quaternary additive code (an additive subgroup of $\F_4^n$) which is dual containing with respect to a certain trace inner product \cite{Calderbank}.
For more information about the structure of quantum codes and their 
algebraic constructions we refer to 
the recent survey \cite{Grassl2}.
In this paper, we only restrict our attention to {$\F_4$-}linear subspaces of $\F_4^n$, and the following theorem gives the connection between quaternary linear codes and quantum codes.
\begin{theorem}\cite[Theorem 2]{Calderbank}\label{quantum def}
  Let $C$ be a linear $[n,k,d]$ code over $\F_4$ such that $C^{\bot_h}\subseteq C$, where $C^{\bot_h}$ is the Hermitian dual of $C$. Then we can construct an $[[n, 2k-n,d']]$ binary quantum code, where $d'$
 is the minimum weight in $C\setminus C^{\bot_h}$.
 If $C= C^{\bot_h}$ then $d'=d$.
\end{theorem}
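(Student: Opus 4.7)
The plan is to derive this from the stabilizer code construction for binary quantum codes via additive quaternary codes, using the fact that the $\F_4$-linearity of $C$ lets us translate Hermitian duality into the trace-Hermitian duality that governs stabilizer codes.

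First I would set up the additive-code framework that underlies \cite{Calderbank}: a binary stabilizer code with parameters $[[n,k_q,d']]$ corresponds to an additive subgroup $S\subseteq \F_4^n$ with $|S|=2^{n-k_q}$ that is self-orthogonal under the trace-Hermitian form $\langle u,v\rangle_*:=\Tr\!\left(\sum_i u_i\bar v_i\right)$, where $\bar x=x^2$ and $\Tr:\F_4\to\F_2$ is the absolute trace. Call the corresponding dual $S^{\bot_*}$. The stabilizer theorem says $d'$ is the minimum weight of $S^{\bot_*}\setminus S$, and equals the minimum weight of $S^{\bot_*}$ when $S=S^{\bot_*}$.

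Second I would bridge the two dualities in the $\F_4$-linear case by showing that $C^{\bot_*}=C^{\bot_h}$ as sets for any $\F_4$-linear $C$. The inclusion $C^{\bot_h}\subseteq C^{\bot_*}$ is immediate from $\Tr(0)=0$. Conversely, if $v\in C^{\bot_*}$, then for every $u\in C$ the element $\alpha:=\sum_i u_i\bar v_i\in\F_4$ satisfies $\Tr(\alpha)=0$; replacing $u$ by $\omega u\in C$ (allowed by $\F_4$-linearity, $\omega$ a generator of $\F_4^*$) also gives $\Tr(\omega\alpha)=0$, and the two linear conditions over $\F_2$ force $\alpha=0$, so $v\in C^{\bot_h}$.

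Third I would take $S:=C^{\bot_h}$. Since $C$ has $\F_4$-dimension $k$, we have $|S|=4^{n-k}=2^{2(n-k)}$; by the hypothesis $C^{\bot_h}\subseteq C$ together with the identification of the previous step, $S\subseteq C=S^{\bot_*}$, so $S$ is a valid stabilizer. Solving $|S|=2^{n-k_q}$ yields $k_q=2k-n$, and the distance $d'$ is the minimum weight of $S^{\bot_*}\setminus S=C\setminus C^{\bot_h}$. In the self-dual case $C=C^{\bot_h}$ this reduces to the minimum weight of $C$, namely $d$, and produces a pure 0-dimensional stabilizer code.

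The main obstacle is not technical depth but correctness of bookkeeping: all the geometry is imported from the stabilizer theorem, and the only real content is the duality identification in step two plus the parameter count in step three. If one were to work directly from the quantum-mechanical definition of a stabilizer code rather than citing \cite{Calderbank}, the harder piece would be verifying that the Pauli group image of $S$ is abelian (which is exactly the trace-Hermitian self-orthogonality) and computing the codespace dimension, but in the present setting both are absorbed into the quoted theorem.
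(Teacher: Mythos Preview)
The paper does not prove this theorem at all: it is quoted verbatim from \cite{Calderbank} and used as a black box throughout. So there is no ``paper's own proof'' to compare against.

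That said, your sketch is correct and is exactly the standard way one derives the linear-code version of the result from the more general additive-code theorem of \cite{Calderbank}. The key step, and the only one with any content, is your identification $C^{\bot_*}=C^{\bot_h}$ for $\F_4$-linear $C$: the kernel of $\Tr:\F_4\to\F_2$ is $\{0,1\}$, so $\Tr(\alpha)=0$ and $\Tr(\omega\alpha)=0$ together force $\alpha=0$, and $\F_4$-linearity of $C$ is precisely what lets you use $\omega u\in C$ to obtain the second condition. The parameter count $|C^{\bot_h}|=4^{n-k}=2^{2(n-k)}=2^{n-k_q}$ giving $k_q=2k-n$ is right, as is the identification of $d'$ with the minimum weight in $S^{\bot_*}\setminus S=C\setminus C^{\bot_h}$. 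In the self-dual case $C\setminus C^{\bot_h}$ is empty and the convention (also stated in \cite{Calderbank}) is that $d'$ is then the minimum weight of $C$; you handled that correctly.
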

If the quantum code of Theorem $\ref{quantum def}$ has minimum distance $d'=d$, then the code is called a {\em pure} quantum code.

There are several secondary constructions for quantum codes which take a quantum code and produce new quantum codes after applying standard constructions such as puncturing, lengthening, and shortening of the original code. The next theorem provides two such constructions.

\begin{theorem}\cite[Theorem 6]{Calderbank}\label{secondary}
Suppose that an $[[n,k,d]]$ quantum code exists. 
\begin{enumerate}
\item If $n \geq 2$ and the code is pure, then there exists an $[[n-1,k+1,d-1]]$ quantum code.
\item If $n \geq 2$, then an $[[n-1,k,d-1]]$ quantum code exists. 
\end{enumerate}
 \end{theorem}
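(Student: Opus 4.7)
The plan is to derive both statements by puncturing the underlying classical code on a carefully chosen coordinate. Let $C\subseteq \F_4^n$ be the Hermitian dual-containing $\F_4$-linear code producing the $[[n,k,d]]$ quantum code through Theorem~\ref{quantum def}, so $C^{\perp_h}\subseteq C$, $\dim C=(n+k)/2$, and the minimum weight of $C\setminus C^{\perp_h}$ equals $d$. For a coordinate $i$, let $C^p$ denote the code obtained by deleting coordinate $i$. The key auxiliary fact is the puncturing/shortening duality asserting that $(C^p)^{\perp_h}$ equals the code obtained from $\{x\in C^{\perp_h}:x_i=0\}$ by deleting coordinate $i$; its proof for the Hermitian inner product is identical to the Euclidean one.

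For part 1, purity provides $c_0\in C\setminus C^{\perp_h}$ of weight exactly $d$; choose $i$ with $(c_0)_i\neq 0$. The harmless assumption $d\geq 2$ rules out weight-one codewords in $C$ supported on $\{i\}$, so puncturing at $i$ is injective on $C$ and $\dim C^p=\dim C$. The duality identity above shows $(C^p)^{\perp_h}\subseteq C^p$, producing a quantum code of length $n-1$ encoding $2\dim C-(n-1)=k+1$ qubits. For the distance, let $c'\in C^p\setminus(C^p)^{\perp_h}$ and let $c\in C$ be its unique preimage. If $c\notin C^{\perp_h}$, then $\wt(c')\geq\wt(c)-1\geq d-1$. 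If $c\in C^{\perp_h}$, then $c_i\neq 0$ (otherwise $c'$ would lie in $(C^p)^{\perp_h}$), and purity forces $\wt(c)\geq d_C=d$, so $\wt(c')=\wt(c)-1\geq d-1$.

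For part 2 the parity constraint $n-1-k\not\equiv n-k\pmod 2$ prevents the target quantum code from arising from an $\F_4$-linear classical code, so I would pass to the additive-code picture and work with the trace self-orthogonal additive stabilizer $S\subseteq \F_4^n$ of size $2^{n-k}$. Puncturing $S$ at a coordinate $i$ yields an additive code whose size depends on how many nonzero vectors of $S$ are supported on $\{i\}$, and the three possibilities produce quantum codes encoding $k-1$, $k$, or $k+1$ qubits respectively. To land in the middle case, I would first adjust $S$ by adding or removing one additive generator supported on $\{i\}$ so that exactly one nonzero vector of the new stabilizer is supported there, and then repeat the weight analysis of part 1 at the additive level. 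The main obstacle in part 2 is this parity adjustment; the weight argument itself essentially duplicates the pure linear case and requires no new ideas.
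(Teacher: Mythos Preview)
The paper does not supply a proof of this statement; Theorem~\ref{secondary} is quoted verbatim as a background result from \cite[Theorem~6]{Calderbank}, so there is no in-paper argument to compare your proposal against.

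On the substance of your sketch: your Part~1 argument is sound \emph{provided} the given $[[n,k,d]]$ code arises from an $\F_4$-linear Hermitian dual-containing code via Theorem~\ref{quantum def}. The theorem as cited, however, is stated for arbitrary (additive) stabilizer codes, and the original proof in \cite{Calderbank} is carried out entirely at the additive level; your linear-only treatment therefore proves a restricted version. Within that restriction the puncturing--shortening duality and the case split on whether the preimage $c$ lies in $C^{\perp_h}$ are correct. For Part~2 you rightly flag the parity obstruction and move to the additive picture, but the step ``adjust $S$ by adding or removing one additive generator supported on $\{i\}$'' is where the real work hides: a weight-one vector $\alpha e_i$ is trace-orthogonal to $S$ only when every $s_i$ lies in the $\F_2$-line $\overline{\alpha}\,\F_2$, which need not hold for a general stabilizer, so one cannot simply append such a generator. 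The argument in \cite{Calderbank} instead handles the three cases (zero, one, or three nonzero vectors of $S$ supported on $\{i\}$) directly and, when necessary, passes to a self-orthogonal subcode of the punctured stabilizer rather than enlarging $S$ beforehand; your outline gestures at this but does not yet close the gap.
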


A new secondary construction for CSS quantum codes
was recently discovered
by M.~Grassl. 
In contrast to case~2 of Theorem~\ref{secondary} above,
it allows one to decrease the length of a code by 2 while
decreasing its distance only by~1.
The details are given in \cite{Grassl3}.

\subsection*{Duadic codes}
\label{subsec-duadic}

{\em Duadic codes} are an important class of linear cyclic codes and they are
thoroughly discussed in \cite[Chapter 6]{Huffman} and \cite[Section 2.7]{Encyclopedia}. We briefly recall several important properties of this class of linear codes below.

Let $q$ be a prime power and $\F_q$ be the field of $q$ elements. Throughout the rest of this section, $n$ is a positive integer such that $\gcd(n,q)=1$. 

A linear code $C\subseteq \F_q^n$ is called \textit{cyclic} if for every $c=(c_0,c_1,\cdots,c_{n-1})\in C$, the vector $(c_{n-1},c_0,\cdots,c_{n-2})$ obtained by a cyclic shift of the coordinates of $c$ is also in $C$. It is well known that there is a one-to-one correspondence between cyclic codes of length $n$ over $\F_q$ and ideals of the ring $\F_q [x]/\langle x^n-1\rangle $, for example see \cite[Section 4.2]{Huffman}. Under this correspondence, each cyclic code can be uniquely represented by a monic polynomial $g(x)$, where $g(x)$ is the minimal degree generator of the corresponding ideal. The polynomial $g(x)$ is called the {\em generator polynomial} of such cyclic code. Let $\alpha$ be a fixed
primitive $n$-th root of unity in $\F_{q^r}$, 
where $r$ is the smallest positive integer such that
$n|(q^r-1)$.
 Alternatively, we can represent the above cyclic code by its unique {\em defining set}
$$\{t: 0\leq t \le n-1 \ \text{and}\ g(\alpha^t)=0 \}.$$
  
{\em Remark.}
In the numerical examples of cyclic codes throughout
this paper,  the $n$-root of unity $\alpha$ is fixed as follows.
Let $\gamma$ be the primitive element in $\F_{q^r}$
chosen by Magma \cite{magma}, 
then set $\alpha=\gamma^{(q^r-1)/n}$.

For each $a\in \mathbb{Z}_n$, the set $Z(a)=\{(aq^j) \bmod n: 0\le j \le m-1\}$, where $m$ is the smallest positive integer such that $aq^m \equiv a \pmod{n}$ is called a \textit{$q$-cyclotomic coset} modulo $n$. The $q$-cyclotomic cosets partition $\mathbb{Z}_n$ and each defining set
of a linear cyclic code is a union of cyclotomic cosets.
 
\begin{lemma}\label{e lisonek}
Let $n$ be a positive odd integer and $C$ be a length $n$ linear cyclic code over $\F_4$ with the defining set $A$. Then
$C^{\bot_h}$ has defining set $\Z_n \setminus (-2A)$.
Moreover, $C^{\bot_h} \subseteq C$ if and only if $A \cap -2A=\emptyset$.
\end{lemma}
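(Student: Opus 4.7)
The plan is to reduce the Hermitian dual to the (Euclidean) dual via the coordinatewise squaring map, and then apply the standard defining-set description of the dual of a cyclic code.

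For $v \in \F_4^n$, write $v^{(2)} = (v_0^2,\ldots,v_{n-1}^2)$. Since the nontrivial Galois automorphism of $\F_4/\F_2$ is $x \mapsto x^2$, the Hermitian form satisfies $\langle u,v\rangle_h = \sum_i u_i v_i^2 = \langle u, v^{(2)}\rangle$, so $v \in C^{\bot_h}$ iff $v^{(2)} \in C^\bot$, where $C^\bot$ is the Euclidean dual. I will then invoke two standard facts: (i) the Euclidean dual of a length-$n$ cyclic code over $\F_q$ with defining set $A$ has defining set $\Z_n \setminus (-A)$; and (ii) inclusion of cyclic codes reverses under the defining-set correspondence.

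The key computation is to convert ``$v^{(2)}$ vanishes at $\alpha^b$'' into a condition on $v$. Using $v_i^4 = v_i$ for $v_i \in \F_4$,
\[
v^{(2)}(\alpha^b)^2 = \sum_i v_i^4 \alpha^{2bi} = \sum_i v_i \alpha^{2bi} = v(\alpha^{2b}),
\]
so $v^{(2)}(\alpha^b) = 0$ iff $v(\alpha^{2b}) = 0$. Hence $v \in C^{\bot_h}$ iff $v(\alpha^{2b}) = 0$ for every $b \in \Z_n \setminus (-A)$. Because $n$ is odd, doubling is a bijection of $\Z_n$, so $\{\,2b \bmod n : b \in \Z_n \setminus (-A)\,\} = \Z_n \setminus (-2A)$. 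This identifies the defining set of $C^{\bot_h}$ as $\Z_n \setminus (-2A)$.

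For the ``moreover'' clause, fact (ii) translates $C^{\bot_h} \subseteq C$ into the reverse containment of defining sets, i.e., $A \subseteq \Z_n \setminus (-2A)$, which is precisely $A \cap (-2A) = \emptyset$. I do not foresee a real obstacle; the argument is a compilation of standard cyclic-code machinery with the small observation $v_i^4 = v_i$ in $\F_4$ together with the bijectivity of doubling on the odd modulus $n$.
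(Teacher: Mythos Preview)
Your proof is correct. The paper itself states this lemma without proof, treating it as a standard fact about cyclic codes over $\F_4$; your argument via the coordinatewise squaring map $v\mapsto v^{(2)}$, the identity $(v^{(2)}(\alpha^b))^2=v(\alpha^{2b})$ coming from $v_i^4=v_i$, and the bijectivity of multiplication by $2$ on $\Z_n$ for odd $n$, is exactly the routine derivation one would supply. The only small point worth making explicit is that $\Z_n\setminus(-2A)$ is automatically a union of $4$-cyclotomic cosets (since $A$ is, and $4(-2A)=-2(4A)=-2A$), so the set of $v$ vanishing at those powers of $\alpha$ really is a cyclic code with that defining set; you implicitly use this but do not state it.
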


For any integer $a$ such that $\gcd(n,a)=1$, the function $\mu_a$ defined on $\mathbb{Z}_n$ by $\mu_a(x)=(ax) \bmod {n}$ is called a {\em multiplier}. Clearly a multiplier is a permutation of $\mathbb{Z}_n$.

\begin{definition}\cite[Section 2.7]{Encyclopedia} 
Let $S_1$ and $S_2$ be unions of $q$-cyclotomic cosets modulo $n$ such that 
\begin{enumerate}
\item $0 \notin S_1 \cup S_2$
\item $S_1 \cup S_2 \cup \{0\}=\mathbb{Z}_n$ and $S_1 \cap S_2= \emptyset$, 
\item there is a multiplier ${\mu}_b$ such that ${\mu}_bS_1=S_2$ and ${\mu}_bS_2=S_1$.
\end{enumerate}
Then the pair $\{S_1,S_2\}$ is called a splitting of $\Z_n$ given by $\mu_b$ over $\F_q$. 
\end{definition}

A vector $(x_1,x_2,\cdots,x_n)\in \F_q^n$ is called {\em even-like} provided that $\displaystyle \sum_{i=1}^{n}x_i=0$
and it is called {\em odd-like} otherwise. A linear code is called {\em even-like} if it has only even-like codewords; a linear code is called {\em odd-like} if it is not even-like. In the binary case an even-like code has only even weights. 

Binary duadic codes were first introduced by Leon et al.\ \cite{Leon-Pless}, and later they were generalized to larger fields by Pless \cite{Pless,Pless3}.
 
\begin{definition}[Duadic codes]\cite[Theorem 6.1.5]{Huffman} \cite[Section 2.7]{Encyclopedia} 
Let $\{S_1,S_2\}$ be a splitting of $\Z_n$ over $\F_q$.
Then the linear cyclic codes with the defining sets $S_1 \cup \{0\}$ and $S_2 \cup \{0\}$ are called a pair of {even-like duadic} codes. The linear cyclic codes with the defining sets $S_1$ and $S_2$ are called a pair of {odd-like duadic} codes.
\end{definition}

 A comprehensive list of important properties of duadic codes is provided below. 
 
\begin{theorem}\cite[Theorem 6.1.3]{Huffman} \label{duadic properties}
Let $(C_1,C_2)$ and $(D_1,D_2)$ be pairs of even-like and odd-like duadic codes of length $n$ over $\F_q$, respectively, such that $C_1\subseteq D_1$ and $C_2 \subseteq D_2$. Then
\begin{enumerate}
\item $C_1$ and $C_2$ (respectively $D_1$ and $D_2$) are permutation equivalent codes. 
\item $C_1\cap C_2=\{0\}$ and $C_1+C_2$ is the cyclic code generated by $x-1$.
\item $D_1 \cap D_2=H$ and $D_1+D_2=\F_q^n$, where $H$ is the subspace of $\F_q^n$ spanned by the all-ones vector. 
\item $\dim C_1=\dim C_2=(n-1)/2$ and $\dim D_1=\dim D_2=(n+1)/2$.
\item $C_1$ is the subcode of $D_1$ containing all even-like vectors. The same holds for $C_2$ as the subcode of $D_2$.
\item $D_1=C_1\oplus H$ and $D_2=C_2\oplus H$.
\item
If $C_1$ is Hermitian self-orthogonal,
then $C_1^{\perp_h}=D_1$ and $C_2^{\perp_h}=D_2$.
\end{enumerate}
\end{theorem}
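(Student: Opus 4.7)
The plan is to reduce each of the seven assertions to a statement about defining sets, then apply the standard correspondence: for cyclic codes $C, C'$ with defining sets $T, T'$, the intersection $C \cap C'$ has defining set $T \cup T'$ and the sum $C + C'$ has defining set $T \cap T'$; the dimension satisfies $\dim C = n - |T|$; the coordinate permutation induced by a multiplier carries a cyclic code to the one whose defining set is the corresponding image; and by Lemma~\ref{e lisonek} the Hermitian dual (in the case $q=4$) of a cyclic code with defining set $T$ has defining set $\Z_n \setminus (-2T)$. Throughout, write $T_i = S_i \cup \{0\}$ for the defining set of $C_i$, while $D_i$ has defining set $S_i$.

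For part~1, the splitting multiplier $\mu_b$ fixes $0$ and interchanges $S_1, S_2$, hence also interchanges $T_1, T_2$; the induced coordinate permutations exhibit the equivalences $C_1 \sim C_2$ and $D_1 \sim D_2$. Part~2 follows since $C_1 \cap C_2$ has defining set $T_1 \cup T_2 = \Z_n$ (the zero code) and $C_1 + C_2$ has defining set $T_1 \cap T_2 = \{0\}$, which is the defining set of the cyclic ideal $\langle x-1 \rangle$. Part~3 is analogous: $D_1 \cap D_2$ has defining set $\Z_n \setminus \{0\}$, forcing it to be the $1$-dimensional line $H$ spanned by the all-ones vector (equivalently the code generated by $1 + x + \cdots + x^{n-1}$), while $D_1 + D_2$ has empty defining set and thus equals $\F_q^n$. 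Part~4 follows because $|S_1| = |S_2|$ via $\mu_b$, combined with $|S_1| + |S_2| + 1 = n$, gives $|S_i| = (n-1)/2$.

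For parts~5 and~6, note that $H$ has defining set $\Z_n \setminus \{0\}$, which contains $S_1$, so $H \subseteq D_1$. Since $\gcd(n,q) = 1$ the all-ones vector has nonzero coordinate sum and is odd-like, hence $H \cap C_1 = \{0\}$. A dimension count against part~4 yields $D_1 = C_1 \oplus H$, and this identifies $C_1$ as the kernel of the coordinate-sum functional on $D_1$, i.e.\ the even-like subcode of $D_1$; the argument for $C_2$ and $D_2$ is identical.

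Part~7 is the only step that uses the Hermitian structure, and I expect it to require the most care. By Lemma~\ref{e lisonek}, $C_1^{\perp_h}$ has defining set $\Z_n \setminus (-2 S_1 \cup \{0\})$. The hypothesis $C_1 \subseteq C_1^{\perp_h}$ translates to the reverse containment on defining sets, which after taking complements and using $0 \notin S_2$ simplifies to $S_2 \subseteq -2 S_1$; since $|{-2 S_1}| = |S_1| = |S_2|$ (using $\gcd(n,2)=1$), in fact $-2 S_1 = S_2$. Substituting back, $C_1^{\perp_h}$ has defining set $S_1$, so $C_1^{\perp_h} = D_1$. For $C_2^{\perp_h} = D_2$, observe that $S_1, S_2$ are unions of $4$-cyclotomic cosets and hence are fixed by multiplication by $4$, so $-2 S_2 = 4 S_1 = S_1$; the same defining-set computation yields the result.
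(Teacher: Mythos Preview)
The paper does not prove Theorem~\ref{duadic properties}; it is quoted as background from \cite[Theorem 6.1.3]{Huffman} with no argument supplied. So there is no ``paper's own proof'' to compare against.

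That said, your defining-set argument is correct and is essentially the standard way this result is established. A few minor remarks: in part~1 you should be explicit that the coordinate permutation $\mu_b$ sends a cyclic code with defining set $A$ to one with defining set $b^{-1}A$ (as the paper states in Section~5), and then use $bS_2=S_1$ to get $b^{-1}S_1=S_2$, so $\mu_b(C_1)=C_2$. In part~5 you implicitly use that $0$ being in the defining set of $C_i$ forces every codeword to satisfy $c(1)=0$, hence to be even-like; it would be cleaner to say this outright before invoking the dimension count. In part~7 your argument is specific to $q=4$ (through Lemma~\ref{e lisonek} and the identity $4S_1=S_1$), which is fine because that is the only setting in which the paper uses the Hermitian dual, but you might flag this restriction since the theorem is stated over general $\F_q$.
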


 Now we briefly mention the class of quadratic residue codes which are special cases of duadic codes. Let $p$ be an odd prime number. Let $Q_p$ be the set of non-zero squares (quadratic residues) modulo $p$ and $N_p$ be the set of nonsquares (quadratic nonresidues) modulo $p$. 
 The sets $Q_p$ and $N_p$ satisfy the following properties: 
 \begin{enumerate}
 \item $|Q_p|=|N_p|=\frac{p-1}{2}$.
 \item $aQ_p=Q_p$ and $aN_p=N_p$ for any $a\in Q_p$. Also, $bQ_p=N_p$ and $bN_p=Q_p$ for any $b\in N_p$. 
\end{enumerate}

If $q\in Q_p$ then each 
$q$-cyclotomic coset modulo $p$ 
different from $\{0\}$
either is a subset of $Q_p$ or
it is a subset of $N_p$.
Thus $Q_p$ and $N_p$ give a splitting of $\Z_p$ given by $\mu_{b}$ for any $b\in N_p$. The duadic codes corresponding to such splitting are called {\em quadratic residue codes,}
abbreviated 
{\em QR codes,} of length $p$ over $\F_q$.

Self-orthogonal duadic codes
and QR codes over $\F_4$ with respect to 
the Hermitian inner products are discussed below. 

\begin{theorem}\cite[Theorem 6.4.4]{Huffman}\label{duadic}
Let $C$ be a linear cyclic code over $\mathbb{F}_4$ with parameters $[n,\frac{n-1}{2}]$. Then $C$ is Hermitian self-orthogonal if and only if $C$ is an even-like duadic code with the multiplier ${\mu}_{-2}$. 
\end{theorem}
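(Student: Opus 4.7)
The plan is to work entirely at the level of defining sets and apply Lemma~\ref{e lisonek}. Let $A$ be the defining set of $C$, so $|A| = n - \dim C = (n+1)/2$. The condition $C \subseteq C^{\perp_h}$ translates (via Lemma~\ref{e lisonek} applied to $C^{\perp_h}$) to $\Z_n \setminus (-2A) \subseteq A$, i.e.\ $A \cup (-2A) = \Z_n$. Since the multiplier $\mu_{-2}$ is a bijection (as $\gcd(n,2)=1$), we have $|{-2A}| = (n+1)/2$, and a cardinality count gives $|A \cap (-2A)| = 1$.

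For the forward direction (self-orthogonality implies even-like duadic with $\mu_{-2}$), I would first show $0 \in A$: the identity $\mu_{-2}(0)=0$ combined with the fact that $\gcd(n,2)=1$ makes $0$ the unique element fixed by $\mu_{-2}$, so if $0\notin A$ then $0 \notin A \cup (-2A) = \Z_n$, a contradiction. Therefore $A \cap (-2A) = \{0\}$. Setting $S_1 = A \setminus \{0\}$ and $S_2 = (-2A) \setminus \{0\} = \mu_{-2}(S_1)$, the properties $S_1 \cap S_2 = \emptyset$ and $S_1 \cup S_2 \cup \{0\} = \Z_n$ are immediate. To verify $\mu_{-2}(S_2) = S_1$ I would use the crucial fact that $S_1$ is a union of $4$-cyclotomic cosets modulo $n$, so $\mu_4(S_1) = S_1$; then $\mu_{-2}(S_2) = \mu_{-2}(\mu_{-2}(S_1)) = \mu_4(S_1) = S_1$. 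This shows $\{S_1, S_2\}$ is a splitting of $\Z_n$ given by $\mu_{-2}$ over $\F_4$, so $C$, whose defining set is $S_1 \cup \{0\}$, is an even-like duadic code with multiplier $\mu_{-2}$.

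The converse direction is a straightforward reversal. If $C$ is even-like duadic with multiplier $\mu_{-2}$, write its defining set as $A = S_1 \cup \{0\}$ where $\{S_1, S_2\}$ is the corresponding splitting. Then $-2A = \mu_{-2}(S_1) \cup \{0\} = S_2 \cup \{0\}$, so $A \cap (-2A) = \{0\}$ and $\Z_n \setminus (-2A) = S_1 \subseteq A$, giving $C \subseteq C^{\perp_h}$ by Lemma~\ref{e lisonek}.

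I do not expect a genuine obstacle in this argument; it is essentially a bookkeeping exercise in defining sets, and the only subtle point is justifying $\mu_{-2}(S_2) = S_1$ from $\mu_{-2}(S_1) = S_2$, which hinges on the cyclotomic-coset invariance $\mu_4(S_1)=S_1$. This identity is specific to working over $\F_4$ and is precisely what ties the multiplier $\mu_{-2}$ (rather than some other $\mu_b$) to Hermitian self-orthogonality, explaining why the theorem singles out this multiplier.
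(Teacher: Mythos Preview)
Your proof is correct. The paper itself does not prove this theorem; it is quoted from \cite[Theorem~6.4.4]{Huffman} without argument, so there is no paper-side proof to compare against. Your defining-set computation via Lemma~\ref{e lisonek} is the standard route and is essentially the argument found in that reference.

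One small inaccuracy worth correcting: you assert that $\gcd(n,2)=1$ makes $0$ the \emph{unique} fixed point of $\mu_{-2}$, but $\mu_{-2}(x)=x$ is equivalent to $3x\equiv 0 \pmod n$, which has nonzero solutions whenever $3\mid n$. Fortunately your argument does not actually use uniqueness; all you need is the implication ``$0\notin A \Rightarrow 0\notin -2A$'', and this follows simply because $\mu_{-2}$ is a bijection with $\mu_{-2}(0)=0$. With that sentence rephrased, the proof goes through exactly as you wrote it.
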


\begin{theorem}\cite[Section 6.6.1]{Huffman}\label{QR prop}
Let $p$ be an odd prime. 
The even-like QR codes of length $p$ over $\F_4$ are Hermitian self-orthogonal
 if and only if
 $p\equiv-1 \pmod 8$ or $p\equiv-3 \pmod 8$.
\end{theorem}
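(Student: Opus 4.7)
The plan is to combine Theorem~\ref{duadic} with a direct Legendre-symbol calculation. By Theorem~\ref{duadic}, a $[p,(p-1)/2]$ cyclic code over $\F_4$ is Hermitian self-orthogonal if and only if it is an even-like duadic code with splitting multiplier $\mu_{-2}$. Since the even-like QR code of length $p$ over $\F_4$ is by definition the even-like duadic code associated with the splitting $\{Q_p,N_p\}$, and since any splitting is given by a unique set of multipliers (those $\mu_b$ which swap the two halves), the theorem reduces to determining the odd primes $p$ for which $\mu_{-2}$ interchanges $Q_p$ and $N_p$, i.e., for which $-2\in N_p$.

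First I would verify the implicit compatibility condition, namely that $Q_p$ and $N_p$ really are unions of $4$-cyclotomic cosets modulo $p$: this is immediate because $4=2^2\in Q_p$, so $4Q_p=Q_p$ and $4N_p=N_p$. Hence the splitting $\{Q_p,N_p\}$ is well defined over $\F_4$, and the multipliers realizing it are precisely those $\mu_b$ with $b\in N_p$.

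Next I would compute $\left(\tfrac{-2}{p}\right)=\left(\tfrac{-1}{p}\right)\left(\tfrac{2}{p}\right)$ using the standard supplementary laws: $\left(\tfrac{-1}{p}\right)=1$ iff $p\equiv 1\pmod 4$, and $\left(\tfrac{2}{p}\right)=1$ iff $p\equiv\pm 1\pmod 8$. Splitting by residue classes modulo $8$ gives $\left(\tfrac{-2}{p}\right)=-1$ precisely when $p\equiv 5\pmod 8$ or $p\equiv 7\pmod 8$, equivalently $p\equiv -3\pmod 8$ or $p\equiv -1\pmod 8$. This is exactly the required congruence condition on $p$.

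There is no serious obstacle here; the only subtle point is the justification that ``$\mu_b$ gives the QR splitting'' is equivalent to ``$b\in N_p$'', which follows from the two properties of $Q_p,N_p$ recalled just before the definition of QR codes in the excerpt. The rest is an elementary Legendre-symbol computation.
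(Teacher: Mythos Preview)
The paper does not give its own proof of this theorem; it is simply quoted from \cite[Section~6.6.1]{Huffman} as a known fact. So there is no proof in the paper to compare against.

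That said, your argument is correct. Reducing via Theorem~\ref{duadic} to the question of whether $\mu_{-2}$ swaps $Q_p$ and $N_p$, and then observing that the multipliers of the QR splitting are exactly the $\mu_b$ with $b\in N_p$, is the right move; the Legendre-symbol computation $\left(\tfrac{-2}{p}\right)=\left(\tfrac{-1}{p}\right)\left(\tfrac{2}{p}\right)$ then finishes it cleanly, and your case split modulo $8$ is accurate. The one point worth stating a touch more carefully is the ``only if'' direction through Theorem~\ref{duadic}: that theorem says a Hermitian self-orthogonal $[p,(p-1)/2]$ cyclic code must be an even-like duadic code for \emph{some} splitting given by $\mu_{-2}$, and you then need that the even-like QR code is tied to the specific splitting $\{Q_p,N_p\}$, so that $\mu_{-2}$ giving this particular splitting is forced. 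You do address this implicitly, but it is the only place where a reader might pause. Otherwise the proof is complete and is essentially how the result is derived in the reference as well.
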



Let $D$ be an odd-like duadic code with the even-like subcode $C$. The {\em minimum odd-like weight} of $D$ is defined by $$d_o=\min\{{\rm wt}(v):v\in D\setminus C\}.$$ Several minimum distance conditions for duadic and QR codes are provided below. 
Let $d(C)$ denote the minimum distance of $C$.

\begin{theorem} \cite[Theorems 6.5.2, 6.6.6, and 6.6.22]{Huffman}\label{square root bound}
Let $D$ be an odd-like duadic code of length $n$ over $\F_q$. Let $d_o$ be the minimum odd-like weight of $D$. Then 
\begin{enumerate}
\item $d_o^2 \geq n$. 
\item If the splitting is given by $\mu_{-1}$, then $d_o^2-d_o+1\geq n$. 
\item
Furthermore, if $n$ is a prime number and $D$ is a QR code, then
\begin{enumerate}
\item[a.] $d(D)=d_o$.
\item[b.] If $q=2$ or $q=4$ and $n\equiv-1 \pmod 8$, 
then $d(D)\equiv 3 \pmod 4$. 
\end{enumerate}
 \end{enumerate}
\end{theorem}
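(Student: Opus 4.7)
Write $(D_1, D_2)$ and $(C_1, C_2)$ for the pairs of odd-like and even-like duadic codes, with splitting $\{S_1, S_2\}$ given by the multiplier $\mu_b$, and fix a primitive $n$-th root of unity $\alpha$.

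For parts~(1) and~(2) I would apply a single ``product in the cyclic ring'' computation. Pick $a(x) \in D_1 \setminus C_1$ with $\wt(a) = d_o$ and let $\hat a(x) := \mu_b(a)(x)$ be the coordinate permutation of $a$ by $\mu_b$. Because $\mu_b$ carries $S_1$ to $S_2$ and merely permutes positions, $\hat a \in D_2$, $\wt(\hat a) = d_o$, and $\hat a(1) = a(1) \neq 0$. Form $p(x) := a(x)\hat a(x) \pmod{x^n - 1}$. For $i \in S_1$, $a(\alpha^i) = 0$; for $i \in S_2$, $\hat a(\alpha^i) = 0$; since $S_1 \cup S_2 = \Z_n \setminus \{0\}$, $p$ vanishes at every non-trivial $n$-th root of unity and so lies in the cyclic code generated by $1 + x + \cdots + x^{n-1}$. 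Evaluating at $1$ gives $p(1) = a(1)^2 \neq 0$, so $p$ is a non-zero scalar multiple of $1 + x + \cdots + x^{n-1}$ and has exactly $n$ non-zero coefficients. On the other hand, expanding $a(x)\hat a(x)$ produces at most $d_o^2$ monomials, and reducing modulo $x^n - 1$ only merges monomials; hence $n \leq d_o^2$, proving~(1). For~(2), $b = -1$ gives $\hat a(x) = \sum_{i \in T} a_i x^{-i}$ with $T = \mathrm{supp}(a)$, and the product is $\sum_{i, j \in T} a_i a_j x^{i - j}$ modulo $x^n - 1$; the $d_o$ diagonal pairs with $i = j$ all collapse onto $x^0$, leaving at most $1 + d_o(d_o - 1) = d_o^2 - d_o + 1$ distinct exponents, whence $n \leq d_o^2 - d_o + 1$.

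For part~(3)(a), Theorem~\ref{duadic properties}(6) gives $D = C \oplus H$, so every codeword of $D$ is either in $C$ (even-like) or in $C + \mathbf{1}$ (odd-like); hence $d(D) = \min(d(C), d_o)$ and it suffices to prove $d(C) \geq d_o$. The strategy exploits the cyclic multiplier group $\{\mu_a : a \in Q_p\}$, which acts as automorphisms on both $C$ and $D$ since it preserves their defining sets. Following \cite[Theorem 6.6.6]{Huffman}, one combines a minimum-weight $v \in C$ with suitable multiplier images to produce an odd-like codeword of weight at most $\wt(v)$, forcing $\wt(v) \geq d_o$; primality of $p$ is essential to control the multiplier orbits. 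For part~(3)(b), Theorem~\ref{QR prop} tells us $C$ is Hermitian self-orthogonal of dimension $(p-1)/2$; its parity extension $\widetilde C$ is Hermitian self-dual of length $p + 1 \equiv 0 \pmod 8$, and under the hypothesis $p \equiv -1 \pmod 8$ all weights of $\widetilde C$ are divisible by $4$ (the classical Gleason-type divisibility). Puncturing at the parity coordinate recovers $D$ and shifts each weight by $0$ or $1$, so weights in $D$ are $\equiv 0 \pmod 4$ (even-like) or $\equiv 3 \pmod 4$ (odd-like). By~(3)(a) the minimum lies in the odd-like layer, giving $d(D) \equiv 3 \pmod 4$.

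The hardest step is~(3)(a): parts~(1) and~(2) reduce cleanly to the product trick, and~(3)(b) then follows by standard divisibility, but establishing $d(C) \geq d_o$ for QR codes of prime length genuinely requires QR-specific input, namely the transitive action of the multiplier group combined with the primality of $n$.
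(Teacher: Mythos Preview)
The paper does not supply its own proof of this theorem; it is quoted verbatim as background from \cite[Theorems 6.5.2, 6.6.6, 6.6.22]{Huffman}, so there is nothing in the paper to compare your argument against.

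Judged on its own merits, your sketch is essentially the standard textbook route. The product-in-the-group-algebra argument for parts~(1) and~(2) is correct and complete as you wrote it. The approach to~(3)(b) via the doubly-even extended code is also the classical one, though you should note that the $q=4$ case is usually handled by reducing to the binary subfield subcode (when $p\equiv -1\pmod 8$ one has $2\in Q_p$, so the binary and quaternary QR codes share a generator and Theorem~\ref{binary-quaternary} applies) rather than by invoking a Gleason-type theorem over~$\F_4$ directly.

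The only soft spot is~(3)(a). You correctly flag it as the hardest step and name the right ingredients (primality of $p$, the multiplier action), but the sentence ``one combines a minimum-weight $v\in C$ with suitable multiplier images to produce an odd-like codeword of weight at most $\wt(v)$'' is not an argument; it is a statement of what one hopes to achieve. The actual mechanism in Huffman goes through the extended QR code and the transitivity of $PSL_2(p)$ on its $p+1$ coordinates: transitivity guarantees a minimum-weight codeword of the extended code that is nonzero on the parity position, and puncturing it yields an odd-like codeword of $D$ realising the minimum weight. If this is meant as a proof sketch pointing to the cited source it is acceptable; if it is meant to stand alone, that step needs to be written out.
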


An extended version of this result is provided in \cite[Theorems 6.5.2, 6.6.22]{Huffman}. In general, although the square root bound is a nice theoretical result, our computations given in Table $\ref{Table2}$ show that it does not provide a tight bound for the minimum distance.

We conclude this section with some useful information regarding when a splitting over $\F_4$ is given by $\mu_{-2}$, or in other words when a duadic code over $\F_4$ is Hermitian self-orthogonal by Theorem $\ref{duadic}$. 

%
\begin{theorem}\cite[Theorems 6.4.9 and 6.4.10]{Huffman}\label{duadic splitting}
Let $p$ be an odd prime number.
\begin{enumerate}
\item If $p\equiv -1 \pmod{8}$ or $p\equiv-3 \pmod{8}$, then every splitting of $\Z_p$ over $\F_4$ is given by $\mu_{-2}$.
\item If $p\equiv 3 \pmod 8$, then there is no splitting of $\Z_p$ given by $\mu_{-2}$ over $\F_4$.
\item If $p\equiv 1 \pmod{8}$, then $\mu_{-2}$ may or may not give a splitting of $\Z_p$ over $\F_4$. 
\end{enumerate}
Moreover, if $\mu_{-2}$ and $\mu_{-1}$ give the same splitting of $\Z_p$ over $\F_4$, then
$p\equiv \pm 1 \pmod{8}$. In particular, if $p\equiv -1 \pmod{8}$, then $\mu_{-2}$ and $\mu_{-1}$ give the same splitting of $\Z_p$ over $\F_4$.  
\end{theorem}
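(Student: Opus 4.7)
The plan is to reduce the entire theorem to the question of when $-2$, $-1$, or $2$ lies in the subgroup $\langle 4\rangle$ of the multiplicative group modulo~$p$. I would start by noting that the non-zero $4$-cyclotomic cosets modulo $p$ are exactly the cosets of $\langle 4\rangle$, and that since $\mu_{-2}^{2}=\mu_{4}$ fixes every cyclotomic coset, $\mu_{-2}$ acts as an involution on the set of non-zero cosets. A splitting given by $\mu_{-2}$ therefore exists if and only if no non-zero coset is fixed by $\mu_{-2}$, equivalently $-2\notin \langle 4\rangle\pmod p$. This observation reduces the whole theorem to membership questions in $\langle 4\rangle$.

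Next I would encode $-2\in\langle 4\rangle$ (and $-1\in\langle 4\rangle$) in terms of $d=\ord_{p}(2)$. If $d$ is odd then $\langle 4\rangle=\langle 2\rangle$ and $-1\notin\langle 2\rangle$, so neither $-1$ nor $-2$ lies in $\langle 4\rangle$. If $d$ is even then $-1=2^{d/2}$ and $-2=2^{d/2+1}$, giving $-1\in\langle 4\rangle$ iff $d\equiv 0\pmod 4$ and $-2\in\langle 4\rangle$ iff $d\equiv 2\pmod 4$. With these two facts I would dispatch the four residue classes. For $p\equiv -1\pmod 8$, $(p-1)/2$ is odd and $d\mid(p-1)/2$ (since $2\in Q_{p}$), so $d$ is odd and the splitting exists. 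For $p\equiv -3\pmod 8$ the product rule for quadratic residues gives $-2\in N_{p}$, hence $-2\notin Q_{p}\supseteq\langle 4\rangle$ and the splitting exists without needing the order of $2$. For $p\equiv 3\pmod 8$, writing $p-1=2m$ with $m$ odd and using $2\in N_{p}$ forces $d$ to be an even divisor of $2m$, hence $d\equiv 2\pmod 4$, so $-2\in\langle 4\rangle$ and no splitting exists. For $p\equiv 1\pmod 8$ no such restriction on $d$ is available and both cases occur numerically.

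For the ``moreover'' part, if $\mu_{-1}$ and $\mu_{-2}$ give a common splitting then both give splittings, so $-1,-2\notin\langle 4\rangle$, which by the $d$-analysis above forces $d$ to be odd; then $d\mid(p-1)/2$, so $2\in Q_{p}$, yielding $p\equiv\pm 1\pmod 8$. For the ``in particular'' claim the earlier $p\equiv -1\pmod 8$ analysis already shows $d$ is odd, so $\langle 4\rangle=\langle 2\rangle$ contains $2$, and hence $\mu_{2}$ acts trivially on cosets; since $\mu_{-2}=\mu_{2}\mu_{-1}$, the two multipliers induce the same permutation of cosets and produce the same splittings. The step I expect to require the most work is $p\equiv 3\pmod 8$, since the quadratic residue test alone leaves $-2\in Q_{p}$ and one must combine it with the structural fact that $p-1$ is twice an odd number to pin down $d\pmod 4$.
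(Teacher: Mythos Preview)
The paper does not prove this theorem; it is quoted from Huffman, so there is no in-paper argument to compare against. Your reduction to the membership of $-2$, $-1$, $2$ in $\langle 4\rangle\le\Z_p^\ast$ is the right framework, and your treatments of part~2, part~3, the ``moreover'' clause, and the ``in particular'' clause are correct.

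There is, however, a real gap in part~1. The condition $-2\notin\langle 4\rangle$ is equivalent only to the \emph{existence} of a splitting given by $\mu_{-2}$ (and indeed you write ``the splitting exists''), whereas part~1 asserts that \emph{every} splitting is given by $\mu_{-2}$. These differ: for $p=17$ one has $-2=15\notin\langle 4\rangle=\{1,4,13,16\}$, yet the splitting with $S_1=\{1,4,13,16\}\cup\{2,8,9,15\}$ is given by $\mu_3$ but satisfies $(-2)S_1=S_1$, so it is \emph{not} given by $\mu_{-2}$. What singles out $p\equiv -1,-3\pmod 8$ is that the quotient $G=\Z_p^\ast/\langle 4\rangle$ then has order $2m$ with $m$ odd (this follows from your own $2$-adic analysis of $d=\ord_p(2)$: in both cases $v_2(|G|)=1$). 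In a cyclic group of order $2m$ with $m$ odd, any element $b$ of even order $2s$ has $s\mid m$, hence $s$ odd, and $b^{s}$ equals the unique involution, namely the class of $-2$; since a splitting given by $\mu_b$ satisfies $b^{j}S_1=S_2$ exactly for odd $j$, this yields $(-2)S_1=b^{s}S_1=S_2$. You need this additional step (or an equivalent one) to upgrade ``a splitting by $\mu_{-2}$ exists'' to ``every splitting is given by $\mu_{-2}$''.
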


\section{A new class of good binary quantum codes}\label{construction section}
A $0$-dimensional quantum code with length $n$ has parameters $[[n,0,d]]$. Such a quantum code represents a single quantum state capable of correcting any $(d-1)/2$ errors. In practice, $0$-dimensional quantum codes can be useful for example in testing whether certain storage locations for qubits are decohering faster than they should \cite{Calderbank}. Moreover, higher-dimensional quantum codes can be constructed by applying Theorem $\ref{secondary}$ part 1 to a $0$-dimensional quantum code.

In this section, we provide a new infinite family of $0$-dimensional quantum codes using duadic codes over $\F_4$. Our construction targets nearly self-orthogonal duadic codes and also bounds the minimum distance of the constructed quantum code using minimum distances of an odd-like and an even-like duadic code. Throughout this section, $n$ always is a positive odd integer. For any integer $a$ such that $\gcd(a,n)=1$, we denote the multiplicative order of $a$ modulo $n$ by 
${\rm ord}_n(a)$.

Constructions of $1$-dimensional quantum codes
 can be found in the literature. 
 One such construction is provided below which is obtained by applying the CSS construction to binary duadic codes. 
 
\begin{theorem}\cite[Theorems 4 and 10]{aly2006}\label{Aly first}
Let $n$ be a positive odd integer. Then there exists a quantum code with parameters $[[n,1,d]]$, where $d^2\geq n$. If ${\rm ord}_n(2)$ is odd, then $d^2-d+1\geq n$.
\end{theorem}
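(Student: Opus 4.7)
The plan is to construct the $[[n,1,d]]$ quantum code by applying the CSS construction to a nested pair of binary duadic codes $C \subset D = C^\bot$ of length $n$, and then to control the minimum distance via the square root bound of Theorem \ref{square root bound} on the minimum odd-like weight of $D$.

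Concretely, I would fix a splitting $\{S_1,S_2\}$ of $\Z_n$ over $\F_2$ and let $C_1$ be the even-like binary duadic code with defining set $S_1\cup\{0\}$ and $D_1$ the odd-like binary duadic code with defining set $S_1$. Using the binary analogue of Lemma \ref{e lisonek} (the Euclidean dual of a cyclic code over $\F_2$ with defining set $A$ has defining set $\Z_n\setminus(-A)$), the defining set of $C_1^\bot$ is $\Z_n\setminus(-(S_1\cup\{0\}))$. When the splitting is given by $\mu_{-1}$ we have $-S_1=S_2$, so this simplifies to $\Z_n\setminus(S_2\cup\{0\})=S_1$, giving $C_1^\bot=D_1$. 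Hence $C_1\subset C_1^\bot$, and by Theorem \ref{duadic properties} we have $\dim C_1=(n-1)/2$ while $\dim D_1=(n+1)/2$. The CSS construction applied to $C_1\subset C_1^\bot$ therefore yields an $[[n,1,d]]$ code whose distance $d$ equals the minimum weight of $D_1\setminus C_1$, which by item~5 of Theorem \ref{duadic properties} is exactly the minimum odd-like weight $d_o$ of $D_1$.

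Theorem \ref{square root bound}(1) then gives $d^2=d_o^2\geq n$, and when the splitting is given by $\mu_{-1}$, Theorem \ref{square root bound}(2) strengthens this to $d^2-d+1\geq n$. A splitting of $\Z_n$ given by $\mu_{-1}$ over $\F_2$ exists exactly when no $2$-cyclotomic coset modulo $n$ is fixed by negation, which is in turn equivalent to ${\rm ord}_n(2)$ being odd; this yields the stronger bound stated in the theorem. The main obstacle is covering those odd $n$ for which no suitable splitting of $\Z_n$ over $\F_2$ exists (including small lengths such as $n=3,5,9$); for these lengths one must fall back on direct constructions such as CSS from the binary repetition code or small stabilizer codes like the five-qubit $[[5,1,3]]$ code, and verify case-by-case that the weaker bound $d^2\geq n$ continues to hold.
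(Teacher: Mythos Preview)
The paper does not prove this statement; it merely cites it from \cite{aly2006}, adding only that the codes arise by applying the CSS construction to binary duadic codes. There is thus no detailed argument in the paper to compare against, but your outline matches that one-line description.

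For the second assertion your argument is essentially complete: when $\ord_n(2)$ is odd, no nonzero $2$-cyclotomic coset is fixed by $\mu_{-1}$, so a splitting given by $\mu_{-1}$ exists, the even-like code $C_1$ satisfies $C_1^\bot=D_1$, CSS produces an $[[n,1,d]]$ code with $d=d_o$, and Theorem~\ref{square root bound}(2) gives $d^2-d+1\ge n$.

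The gap you flag for the first assertion is genuine, but your proposed patch does not close it. Your CSS step already forces the splitting to be given by $\mu_{-1}$ (that is exactly what makes $C_1^\bot=D_1$), so your ``main'' construction proves nothing beyond the $\ord_n(2)$-odd case. The leftover lengths are not a finite handful of small exceptions: every odd $n$ with $\ord_n(2)$ even is excluded, which includes for instance every odd multiple of $3$, of $5$, of $11$, of $13$, and so on. A case-by-case fallback on ``direct constructions'' cannot handle infinitely many lengths, and CSS from the binary repetition code does not yield an $[[n,1]]$ code when $n$ is odd (the all-ones word has odd weight, so the repetition code is not Euclidean self-orthogonal). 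Covering general odd $n$ requires a genuinely different construction, not an ad~hoc enumeration.
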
 
Moreover, Guenda in \cite{guenda} proved that the distance bound $d^2-d+1\geq n$ in Theorem \ref{Aly first} is still valid when ${\rm ord}_n(4)$ is odd. She also found the following new family of quantum codes when ${\rm ord}_n(4)$ is even.
\begin{theorem}\cite[Theorem 16]{guenda}
Let $n=p^m$ be a prime power power, $\gcd(p,2)=1$, and
${\rm ord}_n(4)$ be even. Then there exists an $[[n, 1, d]]$ quantum code with $d^2\ge n$.
\end{theorem}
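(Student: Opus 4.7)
The plan is to produce a Hermitian self-orthogonal even-like duadic code $C$ of length $n$ over $\F_4$, so that its companion odd-like duadic code $D = C^{\perp_h}$, which has dimension $(n+1)/2$ by Theorem~\ref{duadic properties}, satisfies $D^{\perp_h} = C \subseteq D$. Applying Theorem~\ref{quantum def} to $D$ then yields an $[[n, 1, d]]$ quantum code whose minimum distance equals the minimum weight of a vector in $D \setminus C$, i.e., the minimum odd-like weight $d_o$ of $D$. The bound $d^2 \geq n$ is then immediate from the square root bound, Theorem~\ref{square root bound}(1).

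By Theorem~\ref{duadic}, producing such a self-orthogonal even-like duadic code is equivalent to exhibiting a splitting of $\Z_n$ over $\F_4$ given by the multiplier $\mu_{-2}$. The heart of the proof is therefore to construct this splitting from the hypotheses $n = p^m$, $p$ an odd prime, and $\ord_n(4)$ even. I would reduce this to showing that $-2 \notin \langle 4 \rangle$ in $(\Z/p^k\Z)^*$ for every $1 \leq k \leq m$.

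I would establish this in three steps. First, for odd $p$ the standard $p$-adic lifting of orders gives $\ord_{p^k}(4) = p^{s} \cdot \ord_p(4)$ for some $s \geq 0$, so the parity of $\ord_{p^k}(4)$ is independent of $k$; hence $\ord_{p^k}(4)$ is even for all $1 \leq k \leq m$. Second, since $(\Z/p^k\Z)^*$ is cyclic and $\langle 4 \rangle$ has even order, $\langle 4 \rangle$ contains the unique involution $-1$; moreover $\ord_{p^k}(2)$ must be even, as otherwise $\ord_{p^k}(4) = \ord_{p^k}(2)$ would be odd. Hence $2 \notin \langle 4 \rangle$, because $2 = 4^j$ modulo $p^k$ forces $2j \equiv 1 \pmod{\ord_{p^k}(2)}$, which has no solution with even modulus. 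Combining, $-2 = (-1) \cdot 2 \notin \langle 4 \rangle \pmod{p^k}$. Third, a nonzero $4$-cyclotomic coset $Z(p^{j} b) \subseteq \Z_n$ with $\gcd(b,p)=1$ and $0 \leq j < m$ is fixed by $\mu_{-2}$ iff $-2 \in \langle 4 \rangle \pmod{p^{m-j}}$; by the previous step this never occurs, so $\mu_{-2}$ pairs up the nonzero $4$-cyclotomic cosets, and choosing one coset from each pair as $S_1$ and its image under $\mu_{-2}$ as $S_2$ gives the desired splitting.

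The main obstacle is the middle (number-theoretic) step: simultaneously tracking the parities of $\ord_{p^k}(2)$ and $\ord_{p^k}(4)$ across all $k \leq m$ and separating the possibilities $-1 \in \langle 4 \rangle$ and $2 \in \langle 4 \rangle$. Once the splitting given by $\mu_{-2}$ is secured, the remaining assembly -- passing from the even-like duadic code to its odd-like partner, applying Theorem~\ref{quantum def}, and quoting the square root bound -- is routine from the machinery recalled in Section~2.
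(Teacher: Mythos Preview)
The paper does not actually prove this statement; it is quoted verbatim as \cite[Theorem 16]{guenda} without proof, so there is no in-paper argument to compare against.

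That said, your proposal is sound and is precisely the approach one expects from Guenda's original argument: build a splitting of $\Z_{p^m}$ over $\F_4$ given by $\mu_{-2}$, take the resulting even-like duadic code $C$ (Hermitian self-orthogonal by Theorem~\ref{duadic}), apply Theorem~\ref{quantum def} to $D=C^{\perp_h}$, and invoke the square root bound. Your number-theoretic core is correct: from $\ord_{p^m}(4)$ even and $p$ odd you get $\ord_{p^k}(4)$ even for all $k\le m$ via the cyclic structure of $(\Z/p^k\Z)^*$; this forces $-1\in\langle 4\rangle$ and $\ord_{p^k}(2)$ even, hence $2\notin\langle 4\rangle$, so $-2\notin\langle 4\rangle$ modulo every $p^k$. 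Since $(-2)^2=4$, the action of $\mu_{-2}$ on the nonzero $4$-cyclotomic cosets is then a fixed-point-free involution, giving the desired splitting. The identification $d'=d_o$ follows from Theorem~\ref{duadic properties}(5), and Theorem~\ref{square root bound}(1) finishes the bound. Nothing is missing.
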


For $u,v\in\F_4^n$ let
$\langle u,v\rangle_h$ denote their Hermitian inner product.
Further let $\lVert u \rVert=\langle u,u\rangle_h$. One can easily see that $\lVert u \rVert =\wt(u) \mod 2$.
The next theorem gives some useful information about the weights in certain even-like and odd-like quaternary duadic codes.
\begin{lemma}\label{odd weight duadic}
Let $n$ be a positive odd integer and $C_o$ be an odd-like duadic code 
of length $n$
with the multiplier $\mu_{-2}$ over $\F_4$. Let $C_e \subseteq C_o$ be the Hermitian dual of the code $C_o$. Then all codewords in $C_e$ have even weights and all codewords in $C_o\setminus C_e$ have odd weights. 
\end{lemma}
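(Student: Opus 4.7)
The plan is to first identify $C_e$ as the even-like duadic companion of $C_o$ (which immediately handles the first claim via self-orthogonality), then to decompose $C_o = C_e \oplus \langle\mathbf{1}\rangle$ and compute $\langle v,v\rangle_h$ by sesquilinearity for $v\in C_o\setminus C_e$. Concretely, since the splitting of $C_o$ is given by $\mu_{-2}$, Theorem~\ref{duadic} asserts that the even-like duadic companion $E$ of $C_o$ is Hermitian self-orthogonal. Applying part~7 of Theorem~\ref{duadic properties} to $E$ yields $E^{\perp_h} = C_o$, so $C_o^{\perp_h} = E$; hence $C_e = E$ is precisely this even-like duadic companion. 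Self-orthogonality gives $\langle u,u\rangle_h = 0$ for every $u\in C_e$, and combined with the observation $\lVert u\rVert = \wt(u)\bmod 2$ recalled in the preamble, every codeword of $C_e$ has even weight.

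For the second assertion, part~6 of Theorem~\ref{duadic properties} gives $C_o = C_e \oplus \langle\mathbf{1}\rangle$, so each $v\in C_o\setminus C_e$ has a unique representation $v = u + \alpha\mathbf{1}$ with $u\in C_e$ and $\alpha\in\F_4^{\ast}$. I would then expand $\langle v,v\rangle_h$ by sesquilinearity and simplify: $\langle u,u\rangle_h = 0$ by self-orthogonality of $C_e$; $\langle\mathbf{1},u\rangle_h = 0$ because $u\in C_e = C_o^{\perp_h}$ and $\mathbf{1}\in C_o$; $\langle u,\mathbf{1}\rangle_h = \sum_i u_i = 0$ because $C_e$ consists of even-like vectors by part~5 of Theorem~\ref{duadic properties}; and $\langle\mathbf{1},\mathbf{1}\rangle_h = n\bmod 2 = 1$, since $n$ is odd and $\F_4$ has characteristic $2$. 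Only the diagonal term $\langle\alpha\mathbf{1},\alpha\mathbf{1}\rangle_h = \alpha\bar{\alpha}\cdot 1 = \alpha^{3} = 1$ survives (every nonzero element of $\F_4$ is a cube root of unity), so $\lVert v\rVert = 1$ and $\wt(v)$ is odd.

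The main difficulty is really bookkeeping rather than anything deep: one has to apply sesquilinearity with the correct convention and notice that the two cross-terms vanish for different reasons --- $\langle\mathbf{1},u\rangle_h$ via the duality $C_e = C_o^{\perp_h}$, and $\langle u,\mathbf{1}\rangle_h$ via the even-like property of $C_e$. No ingredients beyond the cited facts about $\mu_{-2}$-duadic codes over $\F_4$ are required.
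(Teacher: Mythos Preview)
Your proof is correct and follows essentially the same approach as the paper's: both use the self-orthogonality of $C_e$ for the first claim, then the decomposition $C_o = C_e \oplus \langle \mathbf{1}\rangle$ and a sesquilinear expansion of $\lVert u+\alpha\mathbf{1}\rVert$ for the second. You are simply more explicit than the paper about why the cross terms vanish (and in fact both cross terms already vanish from the even-like property alone, so invoking the duality $C_e=C_o^{\perp_h}$ for one of them is correct but not strictly necessary).
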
  
\begin{proof}
Since $C_e$ is Hermitian self-orthogonal, for any $v \in C_e$,  we have $\lVert v \rVert=0$.  
This proves the first part. 

Let $j$ be the all-ones vector of length $n$ and $H$ be the subspace spanned by $j$ over $\F_4$. By Theorem $\ref{duadic properties}$ part 6, $C_o=C_e\oplus H$. 
Let $u+\alpha j$ be an arbitrary element of $C_o\setminus C_e$, where  $u \in C_e$ and $0 \neq\alpha \in \F_4$. Then $$\lVert u+\alpha j \rVert =\lVert u \rVert+ \lVert \alpha j \rVert+ \langle u, \alpha j \rangle_h +\langle \alpha j,u \rangle_h=1.$$
Hence $u+\alpha j$ has an odd weight.
$\hfill\square$
\end{proof}

%

We recall the following construction of quantum codes from linear codes which is called the {\em nearly self-orthogonal construction of quantum codes} \cite{Lisonek}. This construction extends a linear code, which is not necessarily Hermitian dual containing, to a Hermitian dual containing linear code of a larger length. 

\begin{theorem}\cite{Lisonek}\label{lisonek duadic}
	Let $C$ be an ${[n,k]}$ linear code over $\F_4$ and $e=n-k- \dim(C \cap C^{\bot_h})$. Then there exists a
	quantum code with parameters $[[n+e,2k-n+e,d]]$, where 
	$$ d\geq \min\{d(C), d(C+ C^{\bot_h}) +1\}.$$ 
\end{theorem}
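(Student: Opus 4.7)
The strategy is to lift $C$ to a Hermitian dual-containing code $C^{*} \subseteq \F_4^{n+e}$ of dimension $k+e$ and then invoke Theorem~\ref{quantum def}, which will yield an $[[n+e,\,2(k+e)-(n+e),\,d]]=[[n+e,\,2k-n+e,\,d]]$ quantum code of the claimed dimension. The idea is that $C$ itself is generally not dual-containing, but the obstruction lives in the $e$-dimensional quotient $C^{\bot_h}/(C\cap C^{\bot_h})$, so I add exactly $e$ new coordinates, one per dimension of that quotient, to absorb the obstruction.

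First I would set $h = \dim(C \cap C^{\bot_h})$ so that $e = n-k-h$, and pick vectors $f_1,\dots,f_e \in C^{\bot_h}$ whose images in $C^{\bot_h}/(C\cap C^{\bot_h})$ form a basis. A short calculation using $(C^{\bot_h})^{\bot_h}=C$ identifies the radical of the Hermitian form on $C^{\bot_h}$ with $C\cap C^{\bot_h}$, so the $e\times e$ Gram matrix $\Phi=(\langle f_i,f_j\rangle_{h})$ is invertible and Hermitian. By the classification of non-degenerate Hermitian forms over $\F_4$, there exists an invertible $M\in\F_4^{e\times e}$ with $MM^{*T}=\Phi$, where $*T$ denotes Hermitian conjugate transpose. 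I would then define $C^{*}$ as the row space of $\bigl(\begin{smallmatrix} G & 0 \\ F & M \end{smallmatrix}\bigr)$, where $G$ generates $C$ and the rows of $F$ are the $f_i$. The mixed inner products $\langle(g_\ell,0),(f_i,m_i)\rangle_{h}$ vanish because $f_i\in C^{\bot_h}$, and within the second block the characteristic-$2$ cancellation $\langle(f_i,m_i),(f_j,m_j)\rangle_{h}=\Phi_{ij}+\Phi_{ij}=0$, together with the direct sum decomposition $C^{\bot_h}=(C\cap C^{\bot_h})\oplus\Span(f_1,\dots,f_e)$ and the invertibility of $M$, shows that every $(w,y)\in(C^{*})^{\bot_h}$ decomposes as $(c,0)+\sum_i\alpha_i(f_i,m_i)\in C^{*}$, so that $(C^{*})^{\bot_h}\subseteq C^{*}$.

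For the distance bound, I would take any nonzero $(u,y)\in C^{*}\setminus(C^{*})^{\bot_h}$ and write $u=g+v$ with $g\in C$, $v=\sum_i\alpha_i f_i$, and $y=\sum_i\alpha_i m_i$. When $v=0$, one checks that $(g,0)\in(C^{*})^{\bot_h}$ iff $g\in C\cap C^{\bot_h}$, so outside this subcode $\wt(u,y)=\wt(g)\geq d(C)$. When $v\neq 0$, invertibility of $M$ forces $y\neq 0$ and $C\cap\Span(f_1,\dots,f_e)=\{0\}$ forces $u\neq 0$; since $u\in C+C^{\bot_h}$, this yields $\wt(u,y)\geq d(C+C^{\bot_h})+1$. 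Combining the two cases gives $d\geq\min\{d(C),\,d(C+C^{\bot_h})+1\}$. The main technical obstacle will be securing the matrix $M$, which reduces to the non-degeneracy of the induced Hermitian form on $C^{\bot_h}/(C\cap C^{\bot_h})$; the rest of the argument is routine linear algebra.
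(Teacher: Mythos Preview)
The paper does not give its own proof of this theorem; it is quoted from \cite{Lisonek} and used as a black box. So there is nothing in the paper to compare your argument against.

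That said, your proof is correct and is essentially the standard construction behind this result: append $e$ auxiliary coordinates and use an $e\times e$ matrix $M$ with $M\overline{M}^{T}=\Phi$ to cancel the Hermitian defect of $C^{\bot_h}$ modulo $C\cap C^{\bot_h}$. The key steps---identifying the radical of the Hermitian form on $C^{\bot_h}$ with $C\cap C^{\bot_h}$, invoking the uniqueness of non-degenerate Hermitian forms over $\F_4$ to produce $M$, and then verifying $(C^{*})^{\bot_h}\subseteq C^{*}$ and the distance bound by the two-case split---are all sound. One small point worth tightening: when $C\subseteq C^{\bot_h}$ (equivalently $k=h$), the extended code $C^{*}$ is Hermitian self-dual, so $C^{*}\setminus(C^{*})^{\bot_h}$ is empty and Theorem~\ref{quantum def} sets $d=d(C^{*})$; your case analysis still applies verbatim if you run it over all nonzero $(u,y)\in C^{*}$, using $(u,y)\neq 0$ in place of $(u,y)\notin(C^{*})^{\bot_h}$ to force $g\neq 0$ in Case~1. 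This is exactly the situation exploited later in Theorem~\ref{general quantum construction}, so it is worth stating explicitly.
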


It will be useful to introduce a name and provide a meaning
for the parameter~$e$ occurring in Theorem~\ref{lisonek duadic}.
We define the {\em near self-orthogonality} of a linear code $E$ with respect to the Hermitian inner product by 
$\dim(E)-\dim(E\cap E^{\bot_h})$.
This non-negative integer is~$0$ if and only if $E$ is self-orthogonal,
and in general it measures by how much $E$ misses
the self-orthogonality condition.
Thus
the meaning of the parameter $e$ in Theorem~\ref{lisonek duadic}
is the {\em near self-orthogonality of the code $C^{\perp_h}$,}
which equivalently measures by how much $C$ misses the 
dual-containment condition required in Theorem~\ref{quantum def}.

Next, we classify all the odd-like duadic codes having the near self-ortho\-gona\-lity $e=1$ with respect to the Hermitian inner product.

\begin{theorem}\label{duadic e=1}
Let $C$ be an odd-like duadic code. Then $C$ has the near self-orthogonality parameter $e=1$ if and only if $C$ has multiplier $\mu_{-2}$.
\end{theorem}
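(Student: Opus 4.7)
The plan is to translate the near self-orthogonality $e$ into an invariant of the defining set of $C$, and then recognize the vanishing of that invariant as equivalent to both the dual-containment condition of Lemma~\ref{e lisonek} and the $\mu_{-2}$-splitting condition.

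Let $\{S,S'\}$ be the splitting producing the odd-like duadic code $C$, so $S$ is the defining set of $C$, $|S|=(n-1)/2$, and $\dim C=(n+1)/2$. By Lemma~\ref{e lisonek} the defining set of $C^{\perp_h}$ is $\Z_n\setminus(-2S)$, hence the intersection $C\cap C^{\perp_h}$ has defining set $S\cup(\Z_n\setminus(-2S))$. A short inclusion-exclusion count (using $|{-2S}|=|S|=(n-1)/2$) gives
\[
\dim(C\cap C^{\perp_h}) = \tfrac{n-1}{2}-|S\cap(-2S)|,
\]
and therefore $e=\dim C-\dim(C\cap C^{\perp_h})=1+|S\cap(-2S)|$. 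In particular $e=1$ if and only if $S\cap(-2S)=\emptyset$, which by Lemma~\ref{e lisonek} is exactly the condition $C^{\perp_h}\subseteq C$.

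The ``if'' direction is then immediate: if $C$ has multiplier $\mu_{-2}$ then $S'=-2S$ by definition of a splitting, so $S\cap(-2S)=S\cap S'=\emptyset$ and $e=1$.

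For the converse, assume $e=1$, so $S\cap(-2S)=\emptyset$. Note that $\gcd(-2,n)=1$ since $n$ is odd, and that $-2S$ is a union of $4$-cyclotomic cosets because $4\cdot(-2a)=-2\cdot(4a)$; thus $\mu_{-2}$ is a legitimate multiplier acting on unions of $4$-cyclotomic cosets. Since $0\notin S$ we have $0\notin-2S$, and together with $S\cap(-2S)=\emptyset$ this forces $-2S\subseteq(\Z_n\setminus\{0\})\setminus S=S'$. As $|-2S|=|S'|=(n-1)/2$, equality holds: $-2S=S'$. Finally $-2S'=-2(-2S)=4S=S$, where the last equality uses that $S$ is a union of $4$-cyclotomic cosets. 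Hence $\mu_{-2}$ maps $S$ to $S'$ and $S'$ to $S$, so $C$ has multiplier $\mu_{-2}$.

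The proof is essentially a bookkeeping argument; the only place one must be careful is in the converse, where it is tempting to argue directly from the definition of ``multiplier of a splitting'' without first checking $0\notin-2S$ and that $|-2S|$ matches $|S'|$. Once those two points are recorded, the identification $-2S=S'$ and its twin $-2S'=S$ are immediate.
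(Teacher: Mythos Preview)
Your proof is correct and follows essentially the same approach as the paper: both arguments compute $e$ via the defining set of $C\cap C^{\perp_h}$ (namely $S\cup(\Z_n\setminus(-2S))$) and then reduce $e=1$ to the set-theoretic condition $-2S=S'$. Your derivation of the closed formula $e=1+|S\cap(-2S)|$ is a slightly cleaner packaging of the same inclusion-exclusion count the paper carries out, and your explicit check that $-2S'=4S=S$ makes the ``$\mu_{-2}$ swaps the splitting'' conclusion fully transparent.
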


\begin{proof}
First suppose that $\mu_{-2}$ is a multiplier of $C$. Thus there exists a splitting of $\Z_n$ given by $\mu_{-2}$ in the form $(S_1,S_2)$ such that $S_1$ is the defining set of $C$. The code $C^{\bot_h}$ has the defining set $\Z_n \setminus (-2S_1)=\Z_n \setminus S_2=S_1\cup \{0\}$. Hence $C^{\bot_h}$ is the even-like duadic subcode of $C$ and $$e=\dim(C)-\dim(C\cap C^{\bot_h})=1.$$

Conversely let $(S_1',S_2')$ be a splitting of $\Z_n$ given by $\mu_{a}$ and $C$ be an odd-like duadic code with the defining set $S_1'$ and 
assume that $e=1$. Then

\begin{equation}\label{e=1 duadic}
\begin{split}
e=\dim(C)-\dim(C\cap C^{\bot_h})&=n-|S_1'|-\Big(n-|S_1' \cup \big(\Z_n\setminus (-2S_1')\big)| \Big)\\&=| S_1' \cup \big(\Z_n\setminus (-2S_1')\big)| -| S_1'|.
\end{split}
\end{equation}

Now if $-2S_1'\neq S_2'$, then $\{0,s\} \subseteq \Z_n\setminus (-2S_1')$ for some $s\in S_2'$. Thus $(\ref{e=1 duadic})$ implies that $e\geq 2$ which is a contradiction. Therefore, $-2S_1'= S_2'$ and $\mu_{-2}$ is a multiplier of $C$.
$\hfill\square$
\end{proof}

Next we construct a new family of $0$-dimensional quantum codes. In fact the quantum codes that we are constructing in this section are extended odd-like duadic codes. Later, in Section $\ref{section4}$, we provide a generalization of our construction to any linear codes over $\F_4$. In spite of the known theoretical results on duadic codes and their extended codes, they are not computationally discussed much in the literature. For instance, in \cite{Huffman}, the parameters of length $n$ duadic codes over $\F_4$ are only stated for $n \le 41$. Hence we take advantage of this opportunity and compute the parameters of extended duadic codes for much larger lengths ($n \le 241$).
Now, we state our main result of this section.

\begin{theorem}\label{duadic-quantum}
Let $n$ be a positive odd integer and $C_o$ be an odd-like duadic code of length $n$ with the multiplier $\mu_{-2}$ over $\F_4$. Then there exists a binary quantum code with parameters $[[n+1,0,d]]$, where 
\begin{enumerate}
\item $d \geq \min \{d(C_e),d(C_o)+1\}$, where $C_e$ is the even-like subcode of $C_o$.
\item $d$ is even.
\item If $d(C_o)$ is odd, then $d\geq \sqrt{n}+1$. Moreover, if also $\mu_{-1}$ is a multiplier for $C_o$, then $d^2-3(d-1)\geq n$. 
\end{enumerate}

\end{theorem}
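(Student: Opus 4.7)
The plan is to obtain the stated code as a specific instance of Theorem~\ref{lisonek duadic}. Let $C_e$ denote the even-like duadic subcode of $C_o$; by Theorem~\ref{duadic properties}(4) it has dimension $(n-1)/2$. Since $C_o$ has multiplier $\mu_{-2}$, Theorem~\ref{duadic} shows that $C_e$ is Hermitian self-orthogonal, and then Theorem~\ref{duadic properties}(7) yields $C_e^{\perp_h}=C_o$. In particular $C_e\subseteq C_e^{\perp_h}$, so $\dim(C_e\cap C_e^{\perp_h})=(n-1)/2$, and applying Theorem~\ref{lisonek duadic} to $C=C_e$ produces the parameter
\[
e=n-\tfrac{n-1}{2}-\tfrac{n-1}{2}=1.
\]
The construction therefore yields a quantum code with parameters $[[n+1,0,d]]$. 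Using $C_e\subseteq C_o$ to rewrite $C_e+C_e^{\perp_h}=C_e+C_o=C_o$, the distance bound of Theorem~\ref{lisonek duadic} takes the form
\[
d\geq \min\bigl\{d(C_e),\,d(C_o)+1\bigr\},
\]
which is exactly part~1.

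Part~2 follows from the observation that $k=0$ forces the length-$(n+1)$ code output by Theorem~\ref{lisonek duadic} to be Hermitian self-dual. Over $\F_4$, every self-orthogonal vector $v$ satisfies $\lVert v\rVert=\wt(v)\bmod 2=0$, so every codeword of a Hermitian self-dual code has even weight; hence $d$ is even.

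For part~3, assume $d(C_o)$ is odd. Lemma~\ref{odd weight duadic} says that $C_e$ contains only even-weight codewords while $C_o\setminus C_e$ contains only odd-weight ones, so a minimum-weight codeword of $C_o$ must lie in $C_o\setminus C_e$. This forces $d(C_o)=d_o$ and, combined with $C_e\subseteq C_o$ and the opposite parities of $d(C_e)$ and $d(C_o)$, also $d(C_e)\geq d(C_o)+1$; hence the bound from part~1 simplifies to $d\geq d(C_o)+1$. Theorem~\ref{square root bound}(1) now gives $d_o\geq\sqrt{n}$, so $d\geq \sqrt{n}+1$. If $\mu_{-1}$ is additionally a multiplier of $C_o$, Theorem~\ref{square root bound}(2) strengthens this to $d_o^2-d_o+1\geq n$; since $x\mapsto x^2-x+1$ is increasing on $x\geq 1$ and $d-1\geq d_o$, we obtain $(d-1)^2-(d-1)+1\geq n$, which is precisely $d^2-3(d-1)\geq n$.

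The only real subtlety is choosing the correct input code for Theorem~\ref{lisonek duadic}: feeding it $C_o$ would give $e=0$ and only reproduce a $1$-dimensional extension in the spirit of Theorem~\ref{Aly first}, whereas feeding it the even-like subcode $C_e$ produces the desired $0$-dimensional self-dual extension of length $n+1$. Once that choice is identified, the parity content of Lemma~\ref{odd weight duadic} handles part~2 and the reduction in part~3, and a monotonicity argument applied to the square-root bound closes out the remaining inequality.
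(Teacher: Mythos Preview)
Your proof is correct and follows essentially the same route as the paper: apply Theorem~\ref{lisonek duadic} to $C_e$, use $C_e^{\perp_h}=C_o$ to get $e=1$ and the distance bound, observe that self-duality forces even weights, and then combine Lemma~\ref{odd weight duadic} with the square-root bound for part~3. The only cosmetic difference is that the paper packages the computation $e=1$ as a separate result (Theorem~\ref{duadic e=1}) rather than deriving it inline from Theorems~\ref{duadic} and~\ref{duadic properties}(7) as you do.
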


\begin{proof}
Let $(S_1,S_2)$ be a splitting of $\Z_n$ given by $\mu_{-2}$ over $\F_4$ and $C_o$ and $C_e$ be the odd-like and even-like duadic code with the defining sets $S_1$ and $S_1\cup\{0\}$, respectively. By Theorem $\ref{duadic e=1}$, the code $C_o$ has the near self-orthogonality $e=1$.

The code $C_e$ has parameters $[n,\frac{n-1}{2}]$. Now applying the quantum construction given in Theorem $\ref{lisonek duadic}$ to $C_e$ results in an Hermitian self-dual linear code $Q$ which is also a quantum code with parameters $[[n+1,0,d]]$, where $d \geq \min\{d(C_e), d (C_o) +1\}$. The facts that $Q$ is linear and Hermitian self-dual imply that all weights
in $Q$ are even, as was shown in the proof
of Lemma \ref{odd weight duadic}.

Note that Lemma $\ref{odd weight duadic}$ implies that if
$d(C_o)$ is odd,
then $d(C_o)=d_o$ and $d_o < d(C_e)$. Thus $d_o$ satisfies the square root bound given in Theorem $\ref{square root bound}$.
 The facts that $d\geq d_o+1$ and $d_o\ge \sqrt{n}$ show that $d\geq \sqrt{n}+1$.

Finally, if the same splitting is given by $\mu_{-1}$ and $d(C_o)=d_o$ is odd, then by Theorem $\ref{square root bound}$, $d_o^2-d_o+1\geq n$. Now combining $d-1\geq d_o$ with the previous inequality gives the result.  
$\hfill\square$
\end{proof}
 
The lower bound that we provided in case 1 of Theorem~\ref{duadic-quantum} 
appears to be very good and almost all of our computational results rely on this lower bound. 

Restricting the code lengths to prime numbers in the form $p\equiv-1 \pmod{8}$ or $p\equiv-3 \pmod{8}$ leads to an infinite family of $0$-dimensional quantum codes of length $p+1$.

\begin{corollary}\label{duadic construction 2}
Let $p$ be a prime number such that $p\equiv-1 \pmod{8}$ or $p\equiv-3 \pmod{8}$. Then there exists a $[[p+1,0,d]]$ quantum code with an even minimum distance $d$ and
$$d \geq \min \{d(C_e),d(C_o)+1\},$$ where
$C_o$ is an odd-like duadic code 
over $\F_4$ of length $p$ with multiplier $\mu_{-2}$,
and $C_e$ is the even-like subcode of $C_o$. If $C_o$ is a QR code then $d\ge d(C_o)+1$.
Finally, if $C_o$ is a QR code, $p\equiv-1 \pmod 8$, and $d=d(C_o)+1$, then $d\equiv0\pmod 4$.
\end{corollary}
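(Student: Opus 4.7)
The plan is to package together three observations, each of which follows more or less directly from results already assembled in Section~2 and Section~3.

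First, to establish existence of the $[[p+1,0,d]]$ code together with the bound $d\ge\min\{d(C_e),d(C_o)+1\}$ and the evenness of $d$, I would apply Theorem~\ref{duadic-quantum} to the odd-like duadic code $C_o$. The only point that needs to be verified is that, under the hypothesis $p\equiv -1\pmod 8$ or $p\equiv -3\pmod 8$, an odd-like duadic code of length $p$ over $\F_4$ with multiplier $\mu_{-2}$ actually exists. This is guaranteed by Theorem~\ref{duadic splitting}\,(1), which says that every splitting of $\Z_p$ over $\F_4$ is given by $\mu_{-2}$ in these two residue classes. Items~1 and~2 of Theorem~\ref{duadic-quantum} then supply the bound and the parity.

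Second, to upgrade the bound to $d\ge d(C_o)+1$ in the QR case, I want to show that the minimum in $\min\{d(C_e),d(C_o)+1\}$ is attained by the second term. Since $C_e\subseteq C_o$, certainly $d(C_e)\ge d(C_o)$. By Theorem~\ref{QR prop} the even-like QR subcode $C_e$ is Hermitian self-orthogonal (the congruence conditions on $p$ are exactly those in the corollary), so $C_o$ is covered by the setup of Lemma~\ref{odd weight duadic}: weights in $C_e$ are even, weights in $C_o\setminus C_e$ are odd. Theorem~\ref{square root bound}\,(3a) gives $d(C_o)=d_o$, and since $d_o$ is realized by an odd-like (hence odd-weight) codeword, $d(C_o)$ is odd. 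Combining $d(C_e)\ge d(C_o)$ with the parity mismatch ($d(C_e)$ even, $d(C_o)$ odd) forces $d(C_e)\ge d(C_o)+1$, and the stated bound follows.

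Third, for the divisibility claim, suppose additionally $p\equiv -1\pmod 8$ and $d=d(C_o)+1$. Then Theorem~\ref{square root bound}\,(3b), applied with $q=4$, gives $d(C_o)\equiv 3\pmod 4$, so $d=d(C_o)+1\equiv 0\pmod 4$.

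I do not anticipate a serious obstacle: each of the three parts is essentially a one-line deduction once the right named result from Section~2 is invoked. The only mildly subtle point is the parity argument in the QR case, where one must remember that $d(C_o)$ equals the minimum odd-like weight (not merely bounded by it) in order to conclude that $d(C_o)$ is odd and hence strictly smaller than $d(C_e)$.
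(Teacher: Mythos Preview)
Your proposal is correct and follows essentially the same route as the paper's proof: invoke Theorem~\ref{duadic splitting}(1) and Theorem~\ref{duadic-quantum} for the first claim, use Theorem~\ref{square root bound}(3a) together with the parity information from Lemma~\ref{odd weight duadic} to force $d(C_e)>d(C_o)$ in the QR case, and finish with Theorem~\ref{square root bound}(3b). The only superfluous step is your appeal to Theorem~\ref{QR prop}; since the multiplier $\mu_{-2}$ has already been established in the first part, Lemma~\ref{odd weight duadic} applies directly without needing Hermitian self-orthogonality as an intermediate.
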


\begin{proof}
	We note that
by Theorem $\ref{duadic splitting}$ part 1
the only possible multiplier is $\mu_{-2}$.
The first part
of the claim follows from Theorem $\ref{duadic-quantum}$.
If $C_o$ is a QR code, then Theorem $\ref{square root bound}$ part 3a implies that $d(C_o)=d_o$. Moreover, Lemma $\ref{odd weight duadic}$ implies that $d_o$ is odd and $d(C_e)$ is even. Thus $d_o<d(C_e)$ and $d \geq \min \{d(C_e),d(C_o)+1\}$ implies that $d \geq d_o+1$.
The last fact about the minimum distance follows from Theorem $\ref{square root bound}$ part 3b which implies that $d(C_o)\equiv3\pmod 4$.
$\hfill\square$
\end{proof}

For each positive odd integer $n$, we have $\ord_n(4)\mid \ord_n(2)$ and if $\ord_n(2)$ is odd, then $\ord_n(4)= \ord_n(2)$.
 In the latter case, the binary and quaternary cyclotomic cosets modulo $n$ are the same. 
 Thus the binary and quaternary duadic codes have the same defining sets. In this special case, the following result allows one to compute the minimum distance of quaternary duadic codes much faster by only using the binary duadic code with the same defining set. 
 
\begin{theorem} \cite[Theorem 4]{Pless}\label{binary-quaternary}
Let $C$ be a quaternary linear code of minimum distance $d$ which is generated by a set of binary vectors. Then the binary linear code generated by the same set of generators has the minimum distance $d$.  
\end{theorem}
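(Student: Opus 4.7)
The plan is to prove both inequalities between $d$ and $d(C_2)$, where $C_2$ denotes the binary code generated by the given binary set of generators. One direction is immediate: since every $\F_2$-linear combination of the generators is also an $\F_4$-linear combination, we have $C_2 \subseteq C$, so every nonzero word of $C_2$ is a nonzero word of $C$ and hence $d \le d(C_2)$.

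For the reverse inequality, I would exploit the fact that $\{1,\omega\}$ is an $\F_2$-basis of $\F_4$, where $\omega$ is a primitive element. Let $v \in C$ be a minimum weight codeword, so $v = \sum_i c_i g_i$ with $g_i$ the given binary generators and $c_i \in \F_4$. Writing $c_i = a_i + \omega b_i$ with $a_i,b_i \in \F_2$ and setting $u = \sum_i a_i g_i$ and $w = \sum_i b_i g_i$, one obtains $v = u + \omega w$ with $u, w \in C_2$.

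The key observation is then about supports: at each coordinate $j$, the identity $v_j = u_j + \omega w_j$ together with the $\F_2$-linear independence of $1$ and $\omega$ forces $v_j = 0$ iff $u_j = 0$ and $w_j = 0$. Hence $\mathrm{supp}(v) = \mathrm{supp}(u) \cup \mathrm{supp}(w)$, which in particular gives $\mathrm{wt}(v) \ge \max(\mathrm{wt}(u), \mathrm{wt}(w))$. Since $v \ne 0$, at least one of $u, w$ is a nonzero codeword of $C_2$, and it has weight at most $\mathrm{wt}(v) = d$. Therefore $d(C_2) \le d$, and combining the two inequalities yields $d(C_2) = d$.

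There isn't a genuine obstacle here; the only point that needs care is observing that the decomposition $v = u + \omega w$ genuinely lands in $C_2 \oplus \omega C_2$ (which is why the binary-generated hypothesis is essential — without it, $u$ and $w$ would not be forced into $C_2$), and that the support identity rules out the degenerate case where both $u$ and $w$ vanish.
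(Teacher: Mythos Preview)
Your proof is correct. The paper itself does not supply a proof of this theorem: it is quoted as \cite[Theorem 4]{Pless}, and the paper only remarks afterwards that the statement extends to arbitrary finite extensions of $\F_2$, referring to \cite[Theorem 3.8.8]{Huffman}. So there is no in-paper argument to compare against; your decomposition $v=u+\omega w$ with $u,w\in C_2$ via the $\F_2$-basis $\{1,\omega\}$ of $\F_4$, together with the support identity $\mathrm{supp}(v)=\mathrm{supp}(u)\cup\mathrm{supp}(w)$, is exactly the standard proof of this fact and goes through without issue.
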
 

Although Theorem $\ref{binary-quaternary}$ is stated for linear codes over $\F_4$, in general it holds for linear codes over any finite field extension of the binary field; see Theorem 3.8.8 of \cite{Huffman}.
Next, we give an analogue of Theorem $\ref{lisonek duadic}$ to binary cyclic codes that satisfy Theorem $\ref{binary-quaternary}$. We denote the Euclidean dual of a binary linear code $C$ by $C^\bot$.

\begin{corollary}
Let $n$ be a positive odd integer such that $\ord_n(4)=\ord_n(2)$. If $C$ is an $[n,k]$ binary cyclic code and 
$e=n-k-\dim(C \cap C^\bot)$, then there exists a quantum code with parameters $[[n+e,2k-n+e,d]]$, where $d \geq \min\{d(C), d(C+C^\bot)+1\}$.
\end{corollary}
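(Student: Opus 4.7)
The plan is to lift the binary cyclic code $C$ to an $\F_4$-linear cyclic code $D$ of length $n$ with the same defining set, apply Theorem~\ref{lisonek duadic} to $D$, and then translate the parameters back using Theorem~\ref{binary-quaternary}. The hypothesis $\ord_n(4)=\ord_n(2)$ is equivalent to $\ord_n(2)$ being odd, which forces the $2$-cyclotomic and $4$-cyclotomic cosets modulo $n$ to coincide. Consequently the defining set $A\subseteq\Z_n$ of $C$ is also a union of $4$-cyclotomic cosets, so the quaternary cyclic code $D$ with defining set $A$ is well defined and has $\F_4$-dimension $k$. Since $C$ and $D$ share the same binary generator polynomial, Theorem~\ref{binary-quaternary} gives $d(D)=d(C)$.

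The key step is to identify $D^{\bot_h}$ with the quaternary lift of $C^\bot$. By Lemma~\ref{e lisonek}, the defining set of $D^{\bot_h}$ is $\Z_n\setminus(-2A)$, while the defining set of the binary dual $C^\bot$ is $\Z_n\setminus(-A)$. Because $A$ is a union of $2$-cyclotomic cosets, multiplication by $2$ permutes $A$ setwise, so $2A=A$ and hence $-2A=-A$. Thus $D^{\bot_h}$ and $C^\bot$ have equal defining sets, from which it follows that $D\cap D^{\bot_h}$ and $D+D^{\bot_h}$ are respectively the quaternary lifts of $C\cap C^\bot$ and $C+C^\bot$ (their defining sets, being the union resp.\ intersection of the defining sets of $D$ and $D^{\bot_h}$, coincide with those of $C\cap C^\bot$ and $C+C^\bot$). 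In particular $\dim_{\F_4}(D\cap D^{\bot_h})=\dim(C\cap C^\bot)$, and because $C+C^\bot$ is generated by binary vectors Theorem~\ref{binary-quaternary} also yields $d(D+D^{\bot_h})=d(C+C^\bot)$.

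Applying Theorem~\ref{lisonek duadic} to $D$ then produces a quantum code with parameters $[[n+e',2k-n+e',d]]$ and $d\geq\min\{d(D),d(D+D^{\bot_h})+1\}$, where $e'=n-k-\dim(D\cap D^{\bot_h})$. Substituting the identifications above yields $e'=e$ and the claimed distance bound. The only non-trivial ingredient is the defining-set identity $-2A=-A$, which is exactly where the hypothesis on $\ord_n(2)$ enters; the remainder of the argument is a bookkeeping translation between binary cyclic codes and their quaternary lifts via Theorem~\ref{binary-quaternary}.
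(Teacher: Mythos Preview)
Your proof is correct and follows essentially the same route as the paper: lift $C$ to the quaternary cyclic code $D$ with the same defining set, use $2A=A$ to identify $D^{\bot_h}$ with the lift of $C^\bot$, match dimensions and distances via Theorem~\ref{binary-quaternary}, and then invoke Theorem~\ref{lisonek duadic}. Your write-up is in fact slightly more careful in that you explicitly use $d(D)=d(C)$ (which is needed for the bound), whereas the paper records $d(C^\bot)=d(D^{\bot_h})$ there, a fact not actually required.
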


\begin{proof}
First note that since $\ord_n(4)=\ord_n(2)$, the $2$-cyclotomic and $4$-cyclotomic cosets are the same modulo $n$. Hence binary and quaternary cyclic codes of length $n$ with a fixed defining set have the same dimension over $\F_2$ and $\F_4$, respectively. Let $A$ be the defining set of $C$, and let $D$ be the linear cyclic code over $\F_4$ with the defining set $A$. Thus $D$ is an $[n,k]$ linear code over $\F_4$. The defining set of $D^{\bot_h}$ is $\Z_n \setminus (-2 A)=\Z_n \setminus (-A)$, where the last equality follows from the fact that $2A=A \pmod n$. Hence $D^{\bot_h}$ and $C^\bot$ have the same defining sets. A similar argument shows that $C\cap C^\bot$ (respectively $C+C^\bot$) and $D \cap D^{\bot_h}$ (respectively $D+D^{\bot_h}$) have the same defining sets.
Therefore, $\dim(C\cap C^\bot)=\dim(D\cap D^{\bot_h})$ as linear codes over $\F_2$ and $\F_4$, respectively. Finally, Theorem $\ref{binary-quaternary}$ implies that $d(C^\bot)=d(D^{\bot_h})$ and $d(C+C^\bot)=d(D+D^{\bot_h})$.
Now the result follows by applying Theorem $\ref{lisonek duadic}$ to the code $D$. 
$\hfill\square$
\end{proof}
 
One of advantages of the above result is that binary duadic codes have been studied extensively in the literature. For instance, the exact or probable minimum distance of all binary duadic codes of length $n\le 241$ are determined in \cite{Smid}, \cite{Pless2}, and \cite[Section 6.5]{Huffman}.

\section{A more general family of 0-dimensional quantum codes}\label{section4}

In this section, we generalize the result of Theorem $\ref{duadic-quantum}$. We provide a method of constructing Hermitian self-dual codes over $\F_4$, or equivalently $0$-dimensional quantum codes. We will provide three examples of record-breaking
$0$-dimensional quantum codes obtained from nearly self-orthogonal linear codes over $\F_4$ with $e=3$.

\begin{theorem}\label{general quantum construction}
Let $C$ be an $[n,k]$ linear code over $\F_4$ such that $C\subseteq C^{\bot_h}$. Then there exists a quantum code with parameters $[[2(n-k),0,d]]$, where $d$ is even and $d \ge \min \{d(C),d(C^{\bot_h})+1\}$. 
\end{theorem}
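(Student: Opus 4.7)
The plan is to apply Theorem~\ref{lisonek duadic} directly to the given code $C$. First I would compute the near self-orthogonality parameter $e = n - k - \dim(C \cap C^{\bot_h})$. The hypothesis $C \subseteq C^{\bot_h}$ gives $C \cap C^{\bot_h} = C$, which has dimension $k$, so $e = n - 2k$. This is non-negative because $C \subseteq C^{\bot_h}$ forces $k \le \dim C^{\bot_h} = n - k$.

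Feeding these values into Theorem~\ref{lisonek duadic}, the resulting quantum code has length $n + e = 2(n-k)$ and dimension $2k - n + e = 0$, matching the target parameters. For the distance, the theorem supplies $d \geq \min\{d(C), d(C + C^{\bot_h}) + 1\}$. The simplification $C + C^{\bot_h} = C^{\bot_h}$ is immediate from self-orthogonality, so this becomes $d \geq \min\{d(C), d(C^{\bot_h}) + 1\}$ as required.

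It remains to check that $d$ is even. Since the quantum code has dimension $0$, by Theorem~\ref{quantum def} it is realized by a Hermitian self-dual linear code $Q$ over $\F_4$ of length $2(n-k)$ with $d(Q) = d$. Every $v \in Q$ satisfies $\lVert v \rVert = \langle v, v \rangle_h = 0$, and by the identity $\lVert v \rVert \equiv \wt(v) \pmod 2$ used in the proof of Lemma~\ref{odd weight duadic}, every codeword of $Q$ has even weight; hence $d$ is even.

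I do not anticipate any serious obstacle: the statement is essentially a specialization of the nearly self-orthogonal construction to the case where $C$ is already self-orthogonal, together with the parity argument already established in Lemma~\ref{odd weight duadic}. The only thing I would double-check carefully is the length/dimension arithmetic $n + e = 2(n-k)$ and $2k - n + e = 0$ to ensure nothing is off by a sign.
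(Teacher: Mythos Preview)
Your proposal is correct and follows essentially the same approach as the paper: apply Theorem~\ref{lisonek duadic} to $C$, use $C\subseteq C^{\bot_h}$ to compute $e=n-2k$ and simplify $C+C^{\bot_h}=C^{\bot_h}$, and then observe that the resulting dimension-$0$ code is Hermitian self-dual so all weights are even. If anything, you spell out a couple of steps (e.g., why $e\ge 0$ and the simplification $C+C^{\bot_h}=C^{\bot_h}$) more explicitly than the paper does.
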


\begin{proof}
We apply Theorem $\ref{lisonek duadic}$ to the code $C$. We get $e=n-k-\dim(C \cap C^{\bot_h})=n-2k$. Now, Theorem $\ref{lisonek duadic}$ implies the existence of an $[[2(n-k), 0,d ]]$ quantum code which satisfies $d \ge \min \{d(C),d(C^{\bot_h})+1\}$.
Such quantum code is also an Hermitian self-dual code over $\F_4$. Since Hermitian self-dual codes over $\F_4$ only have even weights, $d$ is even.
$\hfill\square$
\end{proof}

Note that although we only stated the result of Theorem \ref{general quantum construction} for quantum codes,
we also obtain  a $[2(n-k),n-k,d]$ Hermitian self-dual code over $\F_4$, where $d$ is even and $d \ge \min \{d(C),d(C^{\bot_h})+1\}$.

Theorem $\ref{general quantum construction}$ implies the following secondary construction of quantum codes.

\begin{corollary}\label{new secondary}
Let $C$ be a linear
code over $\F_4$ which is also an $[[n,2k-n]]$ quantum code
(in the sense of Theorem~\ref{quantum def}). Then there exists a quantum code with parameters $[[2k,0,d]]$, where $d$ is even and $d\geq \min \{d(C^{\bot_h}),d(C)+1\}$.
\end{corollary}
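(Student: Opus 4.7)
The plan is to derive this corollary by a direct application of Theorem~\ref{general quantum construction}, but with the roles reversed: instead of feeding the theorem the code $C$ itself, I would feed it the Hermitian dual $C^{\bot_h}$. The hypothesis that $C$ defines an $[[n,2k-n]]$ quantum code in the sense of Theorem~\ref{quantum def} means $C$ is an $[n,k]$ linear code over $\F_4$ satisfying $C^{\bot_h}\subseteq C$. Reinterpreting this chain of inclusions, $C^{\bot_h}$ is itself an $[n,n-k]$ linear code that is Hermitian self-orthogonal, since $(C^{\bot_h})^{\bot_h}=C\supseteq C^{\bot_h}$. So $C^{\bot_h}$ is exactly the kind of input required by Theorem~\ref{general quantum construction}.

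Now I would carry out the substitution. Applying Theorem~\ref{general quantum construction} to the $[n,n-k]$ self-orthogonal code $C^{\bot_h}$ produces a quantum code of length $2(n-(n-k))=2k$, dimension $0$, and even minimum distance $d$ satisfying
\[
d\ \geq\ \min\bigl\{d(C^{\bot_h}),\ d((C^{\bot_h})^{\bot_h})+1\bigr\}
\ =\ \min\bigl\{d(C^{\bot_h}),\ d(C)+1\bigr\},
\]
where in the last equality I use $(C^{\bot_h})^{\bot_h}=C$. This gives the claimed $[[2k,0,d]]$ parameters and the stated lower bound, with $d$ even as asserted.

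There is essentially no obstacle: the only subtlety is recognizing that the hypothesis is really about the \emph{dual} being self-orthogonal, so that Theorem~\ref{general quantum construction} can be invoked on $C^{\bot_h}$ rather than $C$. The length $2k$ then emerges automatically from the formula $2(n-\dim)$ once $\dim = n-k$ is substituted, and the evenness of $d$ is inherited verbatim from Theorem~\ref{general quantum construction}.
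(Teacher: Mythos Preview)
Your proposal is correct and follows essentially the same approach as the paper: apply Theorem~\ref{general quantum construction} to the Hermitian self-orthogonal $[n,n-k]$ code $C^{\bot_h}$, obtaining a $[[2k,0,d]]$ quantum code with even $d$ and the stated minimum distance bound via $(C^{\bot_h})^{\bot_h}=C$.
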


\begin{proof}
First note that $C$ is an $[n,k]$ linear code over $\F_4$ and $C^{\bot_h}\subseteq C$. Now applying Theorem $\ref{general quantum construction}$ to the code $C^{\bot_h}$ implies the existence of a $[[2k,0,d]]$ quantum code such that $d$ is even and $d\geq \min \{d(C^{\bot_h}),d(C)+1\}$. 
$\hfill\square$
\end{proof}

\begin{example}
The best known quantum code with parameters $[[93,3]]$ is in correspondence with a Hermitian dual containing linear code $C$ over $\F_4$. Then Corollary $\ref{new secondary}$ implies the existence of a $[[96,0]]$ quantum code. Moreover, $d(C)=21< d(C^{\bot_h})$ since $C^{\bot_h}\subset C$ and $C^{\bot_h}$ has an even weight. Hence there exists a {\em new quantum code} with parameters $[[96,0,22]]$. The previous best known quantum code with the same length and dimension had minimum distance $20$.
\end{example}

Next, we apply Theorem $\ref{general quantum construction}$ to linear cyclic codes over $\F_4$. 

\begin{corollary}\label{generalization cyclic}
Let $n$ be a positive odd integer and $C$ be a length $n$ linear cyclic code over $\F_4$ with the defining set $A$. If $A \cap -2A=\emptyset$, then there exists a $[[2(n-|A|),0,d]]$ quantum code (respectively a $[2(n-|A|),n-|A|,d]$ Hermitian self-dual linear code over $\F_4$) such that $d$ is even and $d \ge \min \{d(C^{\bot_h}),d(C)+1\}$.
\end{corollary}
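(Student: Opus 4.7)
The plan is to apply Theorem \ref{general quantum construction} directly, but not to $C$ itself. The hypothesis $A\cap -2A=\emptyset$ is precisely the condition from Lemma \ref{e lisonek} under which $C^{\bot_h}\subseteq C$. Since Theorem \ref{general quantum construction} requires a self-orthogonal input, the natural candidate to feed into it is not $C$ but $C^{\bot_h}$: then the inclusion $C^{\bot_h}\subseteq (C^{\bot_h})^{\bot_h}=C$ translates exactly into the self-orthogonality hypothesis of Theorem \ref{general quantum construction}. Noting this substitution is the only nontrivial step; once it is made, the rest is bookkeeping.

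Concretely, I would set $C':=C^{\bot_h}$. By Lemma \ref{e lisonek}, $C'$ has defining set $\Z_n\setminus(-2A)$, and since $\mu_{-2}$ is a bijection on $\Z_n$ (here we use $\gcd(n,2)=1$ because $n$ is odd), $|{-2A}|=|A|$, so
\[
\dim_{\F_4}C' \;=\; n-|\Z_n\setminus(-2A)| \;=\; n-(n-|A|) \;=\; |A|.
\]
Thus $C'$ is an $[n,|A|]$ linear code over $\F_4$ satisfying $C'\subseteq C'^{\bot_h}=C$, which is exactly the hypothesis of Theorem \ref{general quantum construction} with $k=|A|$.

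Applying that theorem to $C'$ yields a quantum code of length $2(n-k)=2(n-|A|)$ and dimension $0$, whose minimum distance $d$ is even and satisfies
\[
d \;\geq\; \min\{d(C'),\,d(C'^{\bot_h})+1\} \;=\; \min\{d(C^{\bot_h}),\,d(C)+1\},
\]
which is the claimed bound. The equivalent $[2(n-|A|),\,n-|A|,\,d]$ Hermitian self-dual code over $\F_4$ comes from the remark immediately following Theorem \ref{general quantum construction}, which records that the quantum code produced there is itself a Hermitian self-dual $\F_4$-linear code of the stated parameters.

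The only place where one might trip is in remembering to pass to the dual code before invoking Theorem \ref{general quantum construction}; the temptation is to try to apply the theorem directly to $C$, but $C$ satisfies dual-containment, not self-orthogonality, so it is $C^{\bot_h}$ that must play the role of the input. No separate analysis of the minimum distance or of the parity of $d$ is needed, as both are inherited from Theorem \ref{general quantum construction}.
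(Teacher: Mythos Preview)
Your proof is correct and follows essentially the same route as the paper's: both apply Theorem~\ref{general quantum construction} to the code $C^{\bot_h}$ (not $C$), using Lemma~\ref{e lisonek} to verify $C^{\bot_h}\subseteq C$ and noting $\dim C^{\bot_h}=|A|$. The only difference is cosmetic---you compute $\dim C^{\bot_h}$ via its defining set, while the paper simply states it (it follows immediately from $\dim C^{\bot_h}=n-\dim C=n-(n-|A|)$).
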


\begin{proof}
We have $\dim(C^{\bot_h})=|A|$
and by Lemma $\ref{e lisonek}$, the condition $A \cap -2A=\emptyset$ implies that $C^{\bot_h} \subseteq C$.
Now the result follows from applying Theorem $\ref{general quantum construction}$ to the code $C^{\bot_h}$.
$\hfill\square$
\end{proof}

Next, we provide two new binary quantum codes that were obtained from two linear cyclic codes with $e=3$. 
\begin{example}
Let $n=141$ and $A=Z(2) \cup Z(3)\cup Z(10)$. Note that $-2A=Z(1) \cup Z(5)\cup Z(15)$ and therefore $A \cap -2A=\emptyset$. Moreover, $|A|=69$. Thus Corollary $\ref{generalization cyclic}$ implies the existence of a quantum code with parameters $[[144,0,d]]$, where $d$ is even and $d \ge \min \{d(C^{\bot_h}),d(C)+1\}$. Moreover, the minimum distance computation in Magma shows that $d(C^{\bot_h})\ge 20$ and $d(C)+1\geq 19$. Hence there exists a {\em new quantum code} with parameters $[[144,0,d\geq20]]$.
We note that a quantum code with these parameters
was found simultaneously by M.~Grassl \cite{Grassl3}
using different
methods.
 The previous best known binary quantum code with the same parameters had minimum distance $18$.
\end{example}

\begin{example}
Let $n=123$ and $A=Z(1) \cup Z(2)\cup Z(6) \cup Z(7) \cup Z(9) \cup Z(11)$. Note that $-2A=Z(43) \cup Z(23)\cup Z(3) \cup Z(19) \cup Z(18) \cup Z(14)$ and $A \cap -2A=\emptyset$. Moreover, $|A|=60$. Thus by Corollary $\ref{generalization cyclic}$ there exists a quantum code with parameters $[[126,0,d]]$, where $d$ is even and $d \ge \min \{d(C^{\bot_h}),d(C)+1\}$. Moreover, our Magma computation shows that $d(C^{\bot_h})\ge 22$ and $d(C)+1\geq 21$. The fact that $d$ is even and $d \geq 21$ implies that this quantum code has parameters $[[126,0,d \geq 22]]$ which is a {\em new quantum code}. The previous best quantum code with the same parameters had minimum distance $21$.
\end{example}
 
\section{Minimum distance bounds for cyclic codes using the fixed subcodes}

In general, computing the exact minimum distance for
general linear codes is NP-hard \cite{vardy} and very difficult for linear codes with large lengths and dimensions. 
In \cite{Joundan2}, the authors used the fixed subcode by the action of multipliers to find an upper bound (or even the exact value) for the minimum distance of certain linear cyclic codes. 
In this section, we extend the theory of fixed subcodes by the action of multipliers and determine a new minimum distance lower bound for linear cyclic codes over $\F_4$. 

In the rest of this section, we assume that $n$ is a positive odd integer.
Let $a$ be a positive integer such that $\gcd(n,a)=1$. Then $\mu_a$ acts naturally as a permutation on $\F_4^n$. In particular, 
let $\{e_i: 0 \le i \le n-1\}$ be the standard basis of $\F_4^n$. Then $\mu_a(e_i)=e_{ai}$ for each $0\le i \le n-1$, where 
vector indices are computed modulo $n$. For each $x=(x_0,x_1,\cdots,x_{n-1})\in \F_4^n$, we define $\mu_a(x)$ accordingly as
$\mu_a(x)=(y_0,y_1,\cdots,y_{n-1})$, where $y_i=x_{a^{-1}i}$ for any $0\le i\le n-1$. We denote the matrix representation of $\mu_a$ by $T_a$, that is, $T_ax^T=\mu_a(x)^T$ for each $x\in\F_4^n$.
Let $C$ be a length $n$ linear cyclic code over $\F_4$ with the defining set $A$. The code $\mu_a(C)$ is also a linear cyclic code over $\F_4$ and it has defining set $a^{-1}A$.

Now we formally define the fixed subcodes by the action of multipliers.
\begin{definition}
Let $C$ be a length $n$ linear cyclic code over $\F_4$. The 
linear space of all vectors $v \in C$ such that $\mu_{a}(v)=v$ is called the {fixed subcode} of $C$ under the action of $\mu_a$,
and it is denoted  $C_{a}$. 
\end{definition} 

The code $C_{a}$ is a subcode of $C \cap \mu_a(C)$ and it can be easily computed as $$C_a=C \cap \{ x\in\F_4^n : (T_a-I)x^T=0 \}.$$
 As before, for any integer $a$ such that $\gcd(n,a)=1$ we denote the multiplicative order of $a$ modulo $n$ by $\ord_n(a)$. 

The fixed subcodes of linear cyclic codes are especially important for us to bound the minimum distance of cyclic codes. First note that for each integer $a$ such that $\gcd(n,a)=1$, we have  $d(C)\le d(C_a)$ 
which is an upper bound for $d(C)$. Several of our minimum distance upper bounds in Table $\ref{Table2}$ are obtained 
in this way. Moreover, the next proposition provides a lower bound for the minimum distance of linear cyclic codes over $\F_4$ using the fixed subcodes.

\begin{proposition}\label{fix-subcode order 2}
Let $C\subseteq \F_4^n$ be a linear cyclic code with the defining set $A$ and $a$ be a positive integer such that $aA=A$.
\begin{enumerate}
\item If $\ord_n(a)=2$, then $d(C_{a})/2+1\leq d(C)$.
\item If $\ord_n(a)=i$, where $i>1$ is an odd integer, then $(d(C_{a})-1)/i+1\leq d(C)$.
\end{enumerate}
\end{proposition}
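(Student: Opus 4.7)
The plan is, starting from a minimum-weight codeword $v \in C$, to construct a nonzero element $w \in C_a$ whose weight is small enough to give the desired inequality. The key observation is that position $0$ is always fixed by $\mu_a$ (since $\mu_a(0) = 0$), and by the cyclicity of $C$ a minimum-weight codeword can be replaced by a cyclic shift whose support contains~$0$.

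First I would pick $v \in C$ with $\wt(v) = d(C)$ and translate $v$ cyclically so that $0 \in {\rm supp}(v)$. Since $aA = A$ forces $\mu_a(C) = C$, every power $\mu_a^k(v)$ belongs to $C$. I would then form the orbit sum
$$w \;=\; \sum_{k=0}^{i-1} \mu_a^k(v), \qquad i = \ord_n(a),$$
which lies in $C$, and the reindexing $\mu_a(w) = \sum_k \mu_a^{k+1}(v) = w$ places $w$ in $C_a$.

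For part~2 with $i$ odd, position~$0$ gives $w_0 = i\,v_0 = v_0 \neq 0$ in characteristic~$2$, so $w \neq 0$. Moreover, each $\mu_a^k(v)$ carries the value $v_0$ at position~$0$, so $0$ is counted $i$ times in $\sum_{k=0}^{i-1}|{\rm supp}(\mu_a^k v)| = i\,d(C)$ but only once in $\bigcup_k {\rm supp}(\mu_a^k v) \supseteq {\rm supp}(w)$. This yields $\wt(w) \le i\,d(C) - (i-1) = i(d(C)-1) + 1$, which is the desired bound. For part~1 with $i = 2$, one has $w = v + \mu_a(v)$, and $w_j = 2v_j = 0$ at every fixed point $j$ of $\mu_a$; in particular $w_0 = 0$. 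Combined with $0 \in {\rm supp}(v) \cap {\rm supp}(\mu_a v)$, this gives $|{\rm supp}(v)\cup{\rm supp}(\mu_a v)| \le 2d(C)-1$ while $0 \notin {\rm supp}(w)$, hence $\wt(w) \le 2d(C)-2$.

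The main obstacle is the degenerate possibility $w = 0$ in part~1, which forces $v = \mu_a(v)$ and thus $v \in C_a$; this gives only $d(C_a) \le d(C)$, which still implies $d(C_a)/2 + 1 \le d(C)$ whenever $d(C) \ge 2$. The edge case $d(C) = 1$ forces $C = \F_4^n$ by cyclicity and can be excluded from the hypotheses. The remaining step in each case is clean counting for the orbit sum $w$, where shifting $v$ to include $0$ in its support is what triggers the fixed-point cancellation yielding the crucial $-(i-1)$ saving.
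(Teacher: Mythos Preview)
Your proposal is correct and follows essentially the same route as the paper: take a minimum-weight codeword, shift it so that position~$0$ lies in its support, form the orbit sum under $\mu_a$, and exploit the fixed point at position~$0$ to save $i-1$ in the weight count. You are in fact slightly more careful than the paper in the degenerate case $w=0$ of part~1; the paper simply asserts that $d(C)=d(C_a)$ ``completes the proof,'' whereas you correctly note that one still needs $d(C)\ge 2$ (equivalently $C\neq\F_4^n$) for the stated inequality to follow.
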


\begin{proof}
 Let $v=(v_0,v_1,\cdots,v_{n-1})$ be a minimum weight vector in $C$. Since $C$ is cyclic, without loss of generality, we assume that $v_{0}$ is non-zero. 

1. 
Suppose that $\ord_n(a)=2$.
Then $\mu_a(v+\mu_a(v))=v+\mu_a(v)$.
If $\mu_a(v)=v$, then $d(C)=d(C_{a})$ which completes the proof.  Now suppose $v+\mu_a(v) \neq0$.  
From $aA=A$ we get $\mu_a(C)=C$,
which implies that $v+\mu_{a}(v)$ is a non-zero element of $C_{a}$. Since both $v$ and $\mu_{a}(v)$ have the same coordinates in the $0$-th position we have $d(C_{a})\leq \wt(v+\mu_{a}(v))\leq 2d(C)-2$. Hence $d(C_{a})/2+1\leq d(C)$.

2. Let $\ord_n(a)=i$, where $i>1$ is an odd integer and $w=\sum_{j=0}^{i-1}\mu_{a^{j}}(v)$. Since $i$ is odd, $v_0=w_0$
and $w$ is a non-zero vector. Moreover $\mu_{a}(w)=w$ and therefore $w\in C_{a}$. Note also that $\wt(w)\le i\wt(v)-(i-1)=i(\wt(v)-1)+1$ since each $\mu_{a^{j}}(v)$ has $v_0$ in the $0$-th position. Thus, $d(C_{a})\leq \wt(w)\leq i(d(C)-1)+1$ which implies that $(d(C_{a})-1)/i+1\leq d(C)$.
$\hfill\square$
\end{proof}

Our computations in Magma computer algebra system \cite{magma} show that many linear cyclic codes
satisfying the conditions of Proposition $\ref{fix-subcode order 2}$ part 1 have the same minimum distance as their fixed subcode by an order~2 multiplier. 
In particular, for any $a \in \Z_n$ such that $\ord_n(a)=2$, we computed the minimum distance of all non-trivial length $n$ linear cyclic codes with $9<n<85$ over $\F_4$ satisfying  the conditions of Proposition \ref{fix-subcode order 2} part 1. Among 72417 non-trivial such linear cyclic codes, 70256 of them have the same minimum distance as their corresponding fixed subcode by the action of $\mu_{a}$. The equality rate is about $97\%$ for all these codes.
In general, determining when a linear cyclic code and its fixed subcode by $\mu_{-1}$ have the same minimum distance appears
to be an interesting and presumably a difficult question.

One application of Proposition $\ref{fix-subcode order 2}$ part 1 is provided below. In both of the following examples, the minimum distance of the fixed subcode was computed much faster, while the minimum distance computation for the original code required a much longer time. In particular, we found two new quantum codes after applying the minimum distance lower bound of Proposition $\ref{fix-subcode order 2}$. These codes are explained in detail below.

\begin{example}
Let $n=157$ and $C_o$ be the odd-like QR code over $\F_4$ with the defining set $Z(1)\cup Z(3)\cup Z(9)$. By Corollary $\ref{duadic construction 2}$ there exists a quantum code $Q$ with parameters $[[158,0,d]]$, where $d$ is even and $d \geq d(C_o)+1$. 

Next we use the result of Proposition $\ref{fix-subcode order 2}$ to find a lower bound for $d(C_o)$. Note that $\ord_n(4)=26$ and $4^{13}\equiv -1 \pmod{157}$, and we use the inequality $d( (C_{o})_{-1})/2+1\leq d(C_o)$, where $(C_{o})_{-1}$ is the fixed subcode of $C_o$ by the action of multiplier $\mu_{-1}$. 
Our computation done in Magma \cite{magma} shows that $d((C_{o})_{-1})=36$. Hence $d(C_o)\geq 19$ and the fact that $d$ is even shows that $Q$ has parameters $[[158,0,d\geq 20]]$. Thus $Q$ is a {\em new quantum code} with a better minimum distance in comparison with the previous best known code with the same length and dimension which had minimum distance $19$. 
\end{example}

\begin{example}
Let $n=181$ and $C_o$ be the odd-like QR code of length $n$ over $\F_4$ with the defining set $Z(1)$. Corollary $\ref{duadic construction 2}$ implies the existence of a quantum code $Q$ with parameters $[[182,0,d]]$, where $d$ is even and $d \geq d(C_o)+1$. 

Note that $\mu_{-1}(C_o)=C_o$ and our computations in Magma \cite{magma} show that the fixed subcode of $C_o$ under the action of multiplier $\mu_{-1}$ has the minimum distance $37$. Hence $d(C_o)\geq 19.5$ by Proposition $\ref{fix-subcode order 2}$ and the inequality $d \geq d(C_o)+1$ implies that $d\geq 20.5$. The fact that $d$ is even shows that $Q$ has parameters $[[182,0,d\geq 22]]$. Thus $Q$ is a {\em new quantum code;}  the previous best known quantum code had minimum distance $21$. 
\end{example}

Next we provide a connection between different fixed subcodes which also helps to relate the number of certain weight codewords in the original code and its fixed subcode.
\begin{theorem}
Let $C\subseteq \F_4^n$ be a linear cyclic code of length $n$ and $a$ be a positive integer such that $\ord_n(a)=p$ is prime. Let $A_t$ be the number of weight $t$ codewords in $C$ for any $0\le t \le n$. Then the following statements hold.
\begin{enumerate}
\item $C_{a}=C_{a^j}$ for any $1\le j \le p-1$.
\item Assume $\mu_a(C)=C$ and $0 \le t \le n$. Then either $C_a$ has a weight $t$ codeword or $p\mid A_t$. In particular, either $d(C)=d(C_{a})$ or $p\mid A_{d(C)}$.
\end{enumerate}
\end{theorem}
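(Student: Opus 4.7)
The plan is to handle the two parts separately, both exploiting the fact that $\ord_n(a)=p$ is prime.

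For part 1, I would first observe the easy inclusion $C_a \subseteq C_{a^j}$: any $v\in C$ with $\mu_a(v)=v$ automatically satisfies $\mu_{a^j}(v)=\mu_a^j(v)=v$. For the reverse inclusion, I would exploit primality: since $1\le j\le p-1$, we have $\gcd(j,p)=1$, so $a^j$ has the same order $p$ as $a$ in $(\Z/n\Z)^*$, and in particular there exists an integer $k$ with $a^{jk}\equiv a\pmod n$. Hence for $v\in C_{a^j}$ we get $\mu_a(v)=\mu_{a^{jk}}(v)=\mu_{a^j}^k(v)=v$, so $v\in C_a$.

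For part 2, the idea is to let the cyclic group $G=\langle \mu_a\rangle$ of order $p$ act on $C$. The hypothesis $\mu_a(C)=C$ makes this restriction well defined, and since coordinate permutations preserve Hamming weight, for each fixed $t$ the set $W_t=\{v\in C:\wt(v)=t\}$ is stable under $G$. By the orbit–stabilizer theorem, every $G$-orbit on $W_t$ has size dividing $|G|=p$; primality of $p$ forces each orbit size to be either $1$ or $p$. Orbits of size $1$ are precisely the elements of $W_t$ fixed by $\mu_a$, i.e.\ the weight $t$ codewords in $C_a$. Thus if $C_a$ contains no weight $t$ codeword, then every orbit has size exactly $p$, so $p\mid |W_t|=A_t$. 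This is the dichotomy claimed.

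The ``in particular'' statement then follows by specializing $t=d(C)$: either $p\mid A_{d(C)}$, or $C_a$ contains a codeword of weight $d(C)$, which together with the trivial inequality $d(C_a)\ge d(C)$ (from $C_a\subseteq C$) forces $d(C_a)=d(C)$. I do not foresee any serious obstacle; the only point that requires care is making sure the action of $\mu_a$ on $C$ is a genuine order-$p$ action of a cyclic group, which is guaranteed once we invoke $\mu_a(C)=C$ together with $\ord_n(a)=p$.
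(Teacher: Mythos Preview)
Your proposal is correct and follows essentially the same approach as the paper. For part~1 the paper uses B\'ezout's identity $bp+cj=1$ to write $\mu_a=\mu_a^{bp}\mu_a^{cj}=(\mu_{a^j})^c$, which is exactly your observation that some power of $a^j$ equals $a$ modulo $n$; for part~2 the paper argues directly that the iterates $v,\mu_a(v),\dots,\mu_{a^{p-1}}(v)$ are either all equal or all distinct, which is your orbit--stabilizer argument phrased without the group-action vocabulary.
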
 

\begin{proof} 
1. Let $1\le j \le p-1$. First note that if $v \in C_a$ then $\mu_{a^j}(v)=v$ and therefore $v\in C_{a^j}$.
Hence $C_a\subseteq C_{a^j}$. 
Next since $\gcd(p,j)=1$, we can find integers $b$ and $c$ such that $bp+cj=1$. If $u \in C_{a^j}$, then
\begin{equation*}\label{fixxx}
\mu_{a}(u)=(\mu_{a})^{bp+cj}(u)=(\mu_{a})^{bp}(\mu_{a})^{cj}(u)=(\mu_{a^{j}})^c(u)=u.\end{equation*}
Thus $C_{a^j}\subseteq C_{a}$ which implies that $C_{a}=C_{a^j}$ for any $1\le j \le p-1$.

2. If $A_t=0$ then the conclusion holds trivially.
Otherwise,
let $v$ be a weight $t$ vector in $C$. Then  
\begin{itemize}
\item either  $\mu_{a^j}(v)=v$ for all $1\le j \le p-1$
 \item or $v, \mu_a(v), \cdots, \mu_{a^{p-1}}(v)$ are all different weight $t$ codewords of $C$. 
 \end{itemize}
If the former happens for a weight $t$ codeword of $C$, then $C_a$ also has a weight $t$ vector. Otherwise, we can partition all the weight $t$ codewords of $C$ into sets of size $p$ in the form $\{v, \mu_a(v), \cdots, \mu_{a^{p-1}}(v)\}$. Thus $p\mid A_t$.  

The last part follows by choosing $t=d(C)$.
$\hfill\square$
\end{proof}

The previous result allows us to skip computing certain fixed subcodes in order to find bounds for the minimum distance of linear cyclic codes over $\F_4$. If the group $\Z_n^\ast$ is cyclic and $p \mid |\Z_n^\ast|$, then the code $C_a$ for any order $p$ element $a \in \Z_n^\ast$ is the same. Otherwise, there may exist
$a,b \in \Z_n^\ast$ of the same order  such that $C_a \neq C_b$.

\section{Numerical results}

The constructions given in Sections $\ref{construction section}$ and $\ref{section4}$ lead to many new quantum codes with minimum distances much higher than the previously best known codes.
In some cases the increase is by as much as~10. Overall, the computation is easiest when the near self-orthogonality parameter is $e=1$. In fact, in this case we arrive at the extended duadic codes. However, we also get three record-breaking quantum codes when $e=3$. So our construction goes beyond only the extended duadic codes.

Table $\ref{Table2}$ shows parameters of some good quantum codes. 
In the table, the first two columns show the length and the coset leaders (minimum elements of cyclotomic cosets contained
in the defining set)
of the cyclic code. The convention for the choice
of the primitive $n$-th root of unity for construction
of cyclic codes was explained in a remark
in Section~\ref{subsec-duadic}.
The third column records whether the original code is a QR code, duadic code, or 
some general cyclic code. This is indicated
  with QR, D, and C respectively.

In Table $\ref{Table2}$, 
we used the probable minimum distances provided in \cite[Section 6.5]{Huffman} for binary duadic codes of lengths $217$, $233$, and $239$. The binary and quaternary generator polynomials are the same for these three codes. The probable minimum distance $d$ for each of these three values is denoted by $d^{ap}$ in the table.
All the other minimum distances given in the table are the true minimum distance obtained from some
combination of the minimum distance lower bound of the related construction
and/or computation by the built-in minimum distance function in computer algebra system Magma \cite{magma}, or a reference for
the minimum distance is provided in the source column.

When the exact value of the minimum distance is not known, its lower and upper bounds are separated by a dash. Some of the minimum distance upper bounds presented in Table $\ref{Table2}$ are computed using 
 Magma \cite{magma} functions for attacking the McEliece cryptosystem.  

The ``source'' column in the table provides information about the result used to construct the quantum code. We label theorems, propositions, and corollaries by their first letter in this column. 

Finally, the PMD column shows the minimum distance of previous best known quantum code of the same length and dimension as shown in \cite{Grassl}. In cases where our code listed in Table $\ref{Table2}$ has a strictly higher minimum distance than the previous best known quantum code, we list the distance of our code in boldface in the parameters column.

%

\begin{table}
\vspace{-1.75cm}
\begin{center}
\begin{tabular}{ |p{1.4 cm}|p{2.5 cm}|p{0.8 cm}|p{2.6 cm}| p{1.9 cm}|p{0.9 cm}|}
\hline
 Length & Coset Leaders& Type &Parameters & Source &PMD\\
 \hline
 $n=5$ & $ 1$ & QR & $[[6,0,4]]$& T\ref{duadic-quantum}&4\\
 $n=7$ & $1$ & QR &$[[8,0,4]]$&T\ref{duadic-quantum} & 4\\
 $n=13$ & $ 1$ & QR & $[[14,0,6]]$&T\ref{duadic-quantum}& 6\\
$n=17$ & $ 1,3$ &D& $[[18,0,8]]$& T\ref{duadic-quantum}& 8 \\
 $n=23$ & $1$ & QR & $[[24,0,8]]$& T\ref{duadic-quantum}&8\\
 $n=29$ & $ 1$ & QR&  $[[30,0,12]]$&T\ref{duadic-quantum}& 12\\
 $n=37$ & $ 1$ & QR& $[[38,0,12]]$&T\ref{duadic-quantum}& 12\\
 $n=41$ & $ 1,3$ & D& $[[42,0,12]]$&T\ref{duadic-quantum}& 12\\
 $n=53$ & $ 1$ & QR& $[[54,0,16]]$&T\ref{duadic-quantum}& 16\\
 $n=61$ & $ 1$ &QR&$[[62,0,18]]$&T\ref{duadic-quantum}& 18\\
 $n=93$ & $ 1, 5, 9, 13, 17, 23,$ & C& $[[96,0,{\bf22}]]$&C\ref{generalization cyclic} (e=3)& 20\\
  & 33, 34, 45&&&&\\
 $n=101$ & $ 1$ & QR& $[[102,0,22]]$&T\ref{duadic-quantum}& 22\\
 $n=103$ & $1$ &QR & $[[104,0,20]]$& \cite{Huffman} \& T\ref{binary-quaternary}&20 \\
 $n=113$ & $1,3,5,9$ & D&$[[114,0, 20-26]]$&T\ref{duadic-quantum}& 24\\
 $n=119$ & $1,2,3,6,7,21,51$ &D &$[[120,0, {\bf 20}]]$&T\ref{duadic-quantum}& 18\\
  $n=123$ & $1,2,6,7,9,11$ & C&$[[126,0,{\bf 22}-24]]$& C\ref{generalization cyclic} (e=3) & 21\\
 &19, 21, 31, 47&&&&\\
 $n=137$ & $1,3$ & D& $[[138,0,{\bf 22}]]$ & T\ref{duadic-quantum}& 18\\
   $n=141$ & $2,3,10$ & C&$[[144,0, 20]]$& C\ref{generalization cyclic} (e=3)& 20\\
 $n=145$ & $1,3,5,7,11,29$ &D& $[[146,0,18-32]]$ &T\ref{duadic-quantum} & 21\\
 $n=149$ & $1$ &QR& $[[150,0,18-30]]$ &T\ref{duadic-quantum} & 23\\
 $n=151$ & $1,3,7,11,15$ &D& $[[152,0,{\bf 24}]]$ &\cite{Gaborit} \& T\ref{binary-quaternary}&18\\
 $n=155$ & $ 1, 6, 13, 14, 18, $ &D & $[[156,0,18-24]]$ &T\ref{duadic-quantum}& 19 \\
 &$ 25, 29,35, 62, 75 $&&&&\\
 $n=157$ & $1,3,9$& QR& $[[158,0,{\bf20}-36]]$ & T\ref{duadic-quantum} \& P\ref{fix-subcode order 2} & 19 \\
 $n=167$ & $1$ & QR& $[[168,0,{\bf24}]]$ &\cite{Truong} \& T\ref{binary-quaternary}& 20\\
 $n=173$ & $1$ & QR& $[[174,0,20-36]]$ &T\ref{duadic-quantum} \& P\ref{fix-subcode order 2} & 21 \\
 $n=181$ & $1$& QR& $[[182,0,22-38]]$ & T\ref{duadic-quantum} \& P\ref{fix-subcode order 2} & 23 \\
 $n=185$ & $ 2, 3, 9, 10, 19, 74 $& D& $[[186,0,18-36]]$ & T\ref{duadic-quantum} & 25 \\
 $n=191$ & $1$ & QR & $[[192,0,{\bf28}]]$ & \cite{SuWK} \& T\ref{binary-quaternary}&22\\
 $n=193$ & $1,5$ &D & $[[194,0,20-42]]$ &T\ref{duadic-quantum}& 29 \\
 $n=197$ & $1$& QR& $[[198,0,22-40]]$ &T\ref{duadic-quantum} \& P\ref{fix-subcode order 2} & 31\\
 $n=199$ & $1$ & QR & $[[200,0,{\bf32}]]$&\cite{SuWK} \& T\ref{binary-quaternary}& 22\\
 $n=203$ & $2,3,7,29$ &D & $[[204,0,14-24]]$&T\ref{duadic-quantum}& 22\\
 $n=205$ & $ 2, 5, 6, 7, 18, 21, $ &D & $[[206,0,18-44]]$&T\ref{duadic-quantum} & 23\\
 &$22, 30, 31, 34, 82 $&&&&\\
 $n=221$ & $ 3, 13, 18, 22, 23,31$ &D & $[[222,0,16-40]]$&T\ref{duadic-quantum} &31\\
 &$ 33, 34, 55, 78, 93 $&&&&\\
 $n=223$ & $1,9,19$ & QR & $[[224,0,{\bf 32}]]$&\cite{Joundan} \& T\ref{binary-quaternary} &21\\
 $n=229$ & $1,5,6$& D& $[[230,0,18-54]]$ &T\ref{duadic-quantum}& 22 \\
 $n=233$ & $1,3,7,27$ &D& $[[234,0,{\bf30^{ap}}]]$ &\cite{Huffman} \& T\ref{binary-quaternary} &20 \\
 $n=235$ & $1,2,5,47$ &D& $[[236,0,14-24]]$ &T\ref{duadic-quantum} &20\\
 $n=239$ & $1$ & QR & $[[240,0,{\bf 32^{ap}}]]$ & \cite{Huffman} \& T\ref{binary-quaternary} &20\\
 $n=241$ & $1,3,5,7,9,11,$& D& $[[242,0,14-56]]$ &T\ref{duadic-quantum}& 20\\
 &$13,21,25,35$&&&&\\
\hline
\end{tabular}
\vspace{2mm}
\end{center}
\caption{Parameters of some good $0$-dimensional quantum codes.} 
\label{Table2}
\end{table}

It should be noted that we can apply the secondary construction given in Theorem $\ref{secondary}$ part 2 to the codes listed in Table \ref{Table2} and produce many more record-breaking codes. For instance:

\begin{itemize}
	
	\item the quantum code $[[240,0,32]]$ generates $9$ new quantum codes with parameters $[[240-i,0,32-i]]$ for each $1\le i\le 9$.
	
	\item the quantum code $[[234,0,30]]$ generates $7$ new quantum codes with parameters $[[234-i,0,30-i]]$ for each $1\le i\le 7$.
	
\end{itemize}

The codes obtained
from secondary constructions
are not listed in Table~$\ref{Table2}$.


\section*{Acknowledgement}
The authors would like to thank Markus Grassl for many
interesting discussions and for sharing
the recent preprint \cite{Grassl3}.
This work was supported by 
the Natural Sciences and Engineering Research Council of Canada (NSERC, Project No.\ RGPIN-2015-06250 and RGPIN-2022-04526),

\bibliographystyle{abbrv}
\bibliography{DL-self-dual}

\begin{thebibliography}{10}

\bibitem{aly2006}
S.~A. Aly, A.~Klappenecker, and P.~K. Sarvepalli.
\newblock Remarkable degenerate quantum stabilizer codes derived from duadic
  codes.
\newblock In {\em 2006 IEEE International Symposium on Information Theory},
  pages 1105--1108, 2006.

\bibitem{magma}
W.~Bosma, J.~Cannon, and C.~Playoust.
\newblock The {M}agma algebra system {I}: The user language.
\newblock {\em Journal of Symbolic Computation}, 24(3-4):235--265, 1997.

\bibitem{Calderbank}
A.~R. Calderbank, E.~M. Rains, P.~Shor, and N.~J. Sloane.
\newblock Quantum error correction via codes over {GF}(4).
\newblock {\em IEEE Transactions on Information Theory}, 44(4):1369--1387,
  1998.

\bibitem{Encyclopedia}
C.~Ding.
\newblock Cyclic codes over finite fields.
\newblock In W.~C. Huffman, J.-L. Kim, and P.~Sol\'e, editors, {\em Concise
  encyclopedia of coding theory}, chapter~2. Chapman and Hall/CRC, 2021.

\bibitem{Gaborit}
P.~Gaborit, C.-S. Nedeloaia, and A.~Wassermann.
\newblock On the weight enumerators of duadic and quadratic residue codes.
\newblock {\em IEEE Trans. Inform. Theory}, 51(1):402--407, 2005.

\bibitem{Gottesman}
D.~Gottesman.
\newblock Class of quantum error-correcting codes saturating the quantum
  {H}amming bound.
\newblock {\em Physical Review A}, 54(3):1862, 1996.

\bibitem{Grassl}
M.~Grassl.
\newblock Code {T}ables: {B}ounds on the parameters of various types of codes.
  {\tt http://www.codetables.de/}.

\bibitem{Grassl2}
M.~Grassl.
\newblock Algebraic quantum codes: Linking quantum mechanics and discrete
  mathematics.
\newblock {\em Int. J. Comput. Math. Comput. Syst. Theory}, 6(4):243--259,
  2021.

\bibitem{Grassl3}
M.~Grassl.
\newblock New quantum codes from {CSS} code.
\newblock {\em arXiv:2208.05353}, 2022.

\bibitem{guenda}
K.~Guenda.
\newblock Quantum duadic and affine-invariant codes.
\newblock {\em International Journal of Quantum Information}, 7(01):373--384,
  2009.

\bibitem{Huffman}
W.~C. Huffman and V.~Pless.
\newblock {\em Fundamentals of error-correcting codes}.
\newblock Cambridge University Press, 2010.

\bibitem{Joundan}
A.~Joundan, S.~Nouh, and A.~Namir.
\newblock New efficient techniques to catch lowest weights in large quadratic
  residue codes.
\newblock In {\em Proc. of the 5th International Conference on Advances in
  Computing, Electronics and Communication}, 2017.

\bibitem{Joundan2}
I.~A. Joundan, S.~Nouh, M.~Azouazi, and A.~Namir.
\newblock A new efficient way based on special stabilizer multiplier
  permutations to attack the hardness of the minimum weight search problem for
  large {BCH} codes.
\newblock {\em International Journal of Electrical and Computer Engineering},
  9(2):1232, 2019.

\bibitem{Leon-Pless}
J.~Leon, J.~Masley, and V.~Pless.
\newblock Duadic codes.
\newblock {\em IEEE Transactions on Information Theory}, 30(5):709--714, 1984.

\bibitem{Lisonek}
P.~Lison{\v{e}}k and V.~Singh.
\newblock Quantum codes from nearly self-orthogonal quaternary linear codes.
\newblock {\em Designs, Codes and Cryptography}, 73(2):417--424, 2014.

\bibitem{Pless}
V.~Pless.
\newblock Q-codes.
\newblock {\em Journal of Combinatorial Theory, Series A}, 43(2):258--276,
  1986.

\bibitem{Pless3}
V.~Pless.
\newblock Duadic codes and generalizations.
\newblock In {\em Eurocode ’92}, pages 3--15. Springer, 1993.

\bibitem{Pless2}
V.~Pless, J.~M. Masley, and J.~S. Leon.
\newblock On weights in duadic codes.
\newblock {\em Journal of Combinatorial Theory, Series A}, 44(1):6--21, 1987.

\bibitem{Smid}
M.~Smid.
\newblock Duadic codes.
\newblock {\em IEEE Transactions on Information Theory}, 33(3):432--433, 1987.

\bibitem{SuWK}
W.~K. Su, P.~Y. Shih, T.~C. Lin, and T.~K. Truong.
\newblock On the minimum weights of binary extended quadratic residue codes.
\newblock In {\em 2009 11th International Conference on Advanced Communication
  Technology}, volume~03, pages 1912--1913, 2009.

\bibitem{Truong}
T.~K. Truong, C.~Lee, Y.~Chang, and W.~K. Su.
\newblock A new scheme to determine the weight distributions of binary extended
  quadratic residue codes.
\newblock {\em IEEE Transactions on Communications}, 57(5):1221--1224, 2009.

\bibitem{vardy}
A.~Vardy.
\newblock The intractability of computing the minimum distance of a code.
\newblock {\em IEEE Transactions on Information Theory}, 43(6):1757--1766,
  1997.

\end{thebibliography}

\end{document}